\documentclass[conference]{IEEEtran}
\IEEEoverridecommandlockouts
\usepackage{cite}
\usepackage{amsmath,amssymb,amsfonts}
\usepackage{algorithmic}
\usepackage{algorithm}
\usepackage{graphicx}
\usepackage{textcomp}
\usepackage{xcolor}
\usepackage{stfloats}
\usepackage{tikz}

\usepackage[left=0.67in,right=0.67in,top=0.7in,bottom=1.1in]{geometry}

\usepackage{lipsum}
\usepackage{mathtools}
\usepackage{cuted}

\usepackage[nodisplayskipstretch]{setspace}
\usepackage{bm}
\usepackage{subfigure}
\usepackage{amsthm} 
\usepackage{url}

\def\BibTeX{{\rm B\kern-.05em{\sc i\kern-.025em b}\kern-.08em
		T\kern-.1667em\lower.7ex\hbox{E}\kern-.125emX}}
\newcommand{\mr}[1]{\mathrm{#1}}

\newtheorem{proposition}{Proposition}
\newtheorem{remark}{Remark}

\newcommand{\br}[1]{\bm{\mathrm{#1}}} 

\makeatletter
\def\endthebibliography{%
	\def\@noitemerr{\@latex@warning{Empty `thebibliography' environment}}%
	\endlist
}
\makeatother

\begin{document}
	\title{Rotating ULA-Enabled Computed Tomography for Efficient 3D Spatial Power Spectrum Synthesis: Architecture and Principled Orientation Design \\
		} 
		\author{Haocheng Hua, \textit{Member, IEEE}, Weidong Mei, \textit{Member, IEEE}, Jie Xu, \textit{Fellow, IEEE}, Rui Zhang, \textit{Fellow, IEEE} \\
					\thanks{Part of this paper has been presented at IEEE Vehicular Technology Conference (VTC), Nice, France, June 9-12, 2026 \cite{hua2026rotating}.}
			\thanks{
				Haocheng Hua is with the School of Science and Engineering, The Chinese University of Hong Kong (Shenzhen), Guangdong 518172, China (e-mail: huahaocheng@cuhk.edu.cn).}
			\thanks{Weidong Mei is with the National Key Laboratory of Wireless Communications, University of Electronic Science and Technology of China, Chengdu 611731, China (e-mail: wmei@uestc.edu.cn).} 
			\thanks{Jie Xu is with the School of Science and Engineering, the Shenzhen Future Network of Intelligence Institute (FNii-Shenzhen), and the Guangdong Provincial Key Laboratory of Future Networks of Intelligence, The Chinese University of Hong Kong (Shenzhen), Guangdong 518172, China (email: xujie@cuhk.edu.cn).}
			\thanks{Rui Zhang is with the Department of Electrical and Computer Engineering, National University of Singapore, Singapore 117583 (e-mail: elezhang@nus.edu.sg). R. Zhang is the corresponding author.} 			 
		}
		
		\maketitle
		\begin{abstract}
			This paper proposes an efficient three-dimensional (3D) spatial power spectrum synthesis method by rotating a uniform linear array (ULA) about its center in 3D space. 
			Inspired by classical computed tomography (CT), the ULA performs analog receive combining at each rotation angle to produce a partial coherent sum. 
			By collecting such sums over multiple rotations, the full 3D spectrum can be synthesized online via a single radio-frequency (RF) chain, without explicitly acquiring per-antenna signals. 
			Depending on whether the overall coherent sum is accessible, the synthesis is obtained through either a minimum operation over partial spectrum images or joint synthesis after accumulating all coherent sums. 
			Compared with fixed uniform planar array (UPA)-based combining and dense single movable-antenna (MA) sampling for 3D cubic virtual arrays, the proposed scheme achieves full-space 3D coverage with substantially reduced sampling and movement overhead while maintaining uniformly high angular resolution. 
			Its sampling geometry and sequential orientation design also support pipelined analog beamforming, reducing practical hardware cost.  
			To design rotation orientations in a principled manner without prior environmental information, we aim to maximize the expected worst-case projected separation between multi-path component (MPC) pairs. 
			We show that the globally optimal relaxed design is isotropic in the sense that the summed outer products of all orientation axes form a scaled identity matrix, and this design can be realized by finite orientation sets for any number of orientations no smaller than three. 
			A secondary criterion then minimizes the worst-case projective correlation among orientation axes to reduce orientation redundancy. 
			Accordingly, we optimize orientation sets for different numbers of orientations under isotropic-matrix and unit-norm constraints, using a multi-start smooth minimax algorithm. 
			Numerical results show that the optimized orientations uniformly span 3D space and reconstruct full-space 3D spectra close to the dense 3D cubic reference using only a fraction of spatial samples.
		\end{abstract}
		\begin{IEEEkeywords}
			spatial power spectrum, uniform linear array (ULA), computed tomography, analog receive combining, virtual antenna array.
		\end{IEEEkeywords}

		\section{Introduction} 
		\label{sec:intro}

		Multi-antenna technology has been a cornerstone of fourth-generation (4G) and fifth-generation (5G) wireless networks \cite{heath2018foundations}. Looking ahead to next-generation/sixth-generation (6G) networks, emerging extremely large-scale antenna array (ELAA) \cite{lu2024tutorial,hua2025near}, intelligent reflecting surface (IRS)/reconfigurable intelligent surface (RIS) \cite{wu2024intelligent}, and movable antenna (MA)/six-dimensional movable antenna (6DMA) technologies \cite{zhu2025tutorial,shao20246d,hua2025hierarchically} are expected to provide new spatial degrees of freedom (DoFs) for both communication and wireless sensing. The exploitation of such DoFs and the design and configuration of those different emerging wireless systems, however, rely critically on timely and reliable characterization of the wireless environment. In particular, beyond estimating instantaneous channel coefficients at a finite set of antennas, many wireless applications could benefit from a compact description of how the received power is distributed over the three-dimensional (3D) angular domain.
		
		The spatial power spectrum (SPS), also referred to as the angular multi-path power profile, provides such a representation by depicting the received signal power distribution in the angular domain \cite{zhao2023nerf2,medbo201560,medbo2016frequency,zhou20256d}. Compared with location-dependent received signal strength (RSS) or a sparse list of estimated multi-path parameters \cite{zeng2024tutorial}, the SPS directly preserves the angular structure of dominant multi-path components (MPCs) caused by line-of-sight (LoS) propagation and specular reflections, as well as diffuse multi-path components (DMCs) induced by dense scattering \cite{medbo201560}. Moreover, by representing the wireless environment as a spectrum image, the SPS is naturally compatible with modern image-based processing and learning pipelines. This makes efficient SPS synthesis an important primitive for MA/6DMA array configuration \cite{zhu2025tutorial,shao20246d,hua2025hierarchically}, IRS/RIS phase tuning \cite{wu2024intelligent}, millimeter-wave beam alignment \cite{heath2016overview}, among others. Besides, efficient SPS synthesis pipeline could accelerate the SPS data acquisition and collection in different wireless environments, which is crucial for the wireless environment sensing and learning applications.
		
		Despite its importance, synthesizing a full-space 3D SPS with high angular resolution remains challenging. One conventional approach is to use a uniform planar array (UPA) with analog receive combining to sweep candidate azimuth/elevation directions and directly measure the beamformed output power, i.e., online synthesis \cite{zhao2023nerf2}. This approach and its underlying receiver architecture are efficient in the sense that they can operate with a single radio-frequency (RF) chain, but a planar array inherently suffers from front-back ambiguity and therefore provides only half-space coverage. In addition, its angular resolution becomes non-uniform and deteriorates near directions close to the array plane due to the shrinkage of the projected aperture. Another approach forms a dense 3D cubic virtual array by moving an antenna element over a volumetric grid and storing the measured received signals at each position. Then, the full-space 3D SPS is synthesized offline based on all the measured signals via beamforming \cite{medbo201560,medbo2016frequency}. This method can provide full-space coverage and nearly uniform angular resolution, but it requires explicitly measuring and storing the received signals over massive spatial positions for a cubic grid. Such dense volumetric sampling causes substantial acquisition latency and movement overhead, and it is generally impractical to build a gigantic 3D phased array to cover all the sampling positions and implement online analog combining.
		
		To address these issues, we propose in this paper an efficient rotating uniform linear array (RULA)-enabled computed tomography (CT) framework for online full-space 3D SPS synthesis. Our main results are summarized as follows:
		\begin{itemize}
			\item We first establish the definition of full-space 3D SPS, based on which we review two representative conventional approaches, namely online analog receive combining with a fixed UPA and offline dense sampling via a single MA to form a 3D cubic virtual array. This comparison shows that the SPS synthesized via the UPA-based approach is limited to half-space coverage with non-uniform resolution, whereas the 3D cubic virtual-array approach provides full-space high-resolution synthesis at the cost of massive spatial samples and substantial antenna movement overhead.
			\item We then propose the RULA-enabled online synthesis framework. By rotating a ULA about its center in 3D space, the receiver performs analog receive combining at each orientation to produce a partial coherent sum. By collecting such partial coherent sums over multiple rotations, the full 3D SPS can be synthesized online via a single RF chain. We then develop two synthesis rules under the same RULA sampling geometry. When only partial-spectrum images are available, the full-space SPS is obtained through a pointwise minimum fusion operation. When the overall coherent sum across rotations is accessible, we further propose a joint coherent fusion operation by accumulating all collected coherent sums. The implementation feasibility and requirements of 3D cubic sampling, UPA sampling, and RULA sampling are then compared with each other, further highlighting the efficiency of the proposed RULA-CT framework.
			\item To facilitate principled orientation design, we further propose two design criteria for orientation design without requiring prior information about the surrounding environment. The primary criterion maximizes the expected worst-case projected separation between randomly distributed MPC pairs. We show that the globally optimal relaxed design is isotropic in the sense that the summed outer products of all orientation axes form a scaled identity matrix, and prove that this condition can be exactly realized by finite unit-norm tight frame orientation sets. We also derive the scaling law of the corresponding objective value. After the primary criterion is satisfied, we further propose a secondary criterion that minimizes the worst-case projective correlation among the ULA orientation axes. Based on the two criteria, we formulate an optimization problem that minimizes the worst-case projective correlation subject to the isotropic-matrix and unit-norm constraints. A multi-start smooth minimax algorithm is then proposed to optimize orientation designs for different numbers of orientations.
			\item Numerical results validate the efficacy of our proposed orientation design and SPS synthesis framework. The optimized orientation sets are shown to span the 3D space uniformly. With the optimized design, the proposed RULA-CT method reconstructs full-space 3D spectra close to the dense 3D cubic reference using only a fraction of the spatial samples required by dense 3D cubic sampling. The results also show that joint coherent fusion is more robust in the low signal-to-noise ratio (SNR) regime due to coherent aggregation, while pointwise minimum fusion achieves higher structural fidelity in the high-SNR regime by suppressing sidelobe and ambiguity artifacts.
		\end{itemize}
		
		The remainder of this paper is organized as follows. Section II presents the channel model, the definition of SPS, and reviews conventional full-space offline and half-space online SPS synthesis methods. Section III introduces the proposed RULA-enabled online synthesis framework and compares it with existing sampling strategies. Section IV develops the environment-agnostic orientation design, followed by the formulated optimization problem and the proposed multi-start smooth minimax algorithm. Section V provides numerical results to validate the proposed design and compare it with benchmark methods. Section VI concludes this paper.
		
		\textit{Notations:} Boldface lower-case and upper-case letters denote vectors and matrices, respectively. For a matrix \(\br M\), \(\br M^H\), \(\br M^T\), \(\br M^{-1}\), \(\det(\br M)\), \(\operatorname{tr}(\br M)\), \(\|\br M\|_F\), \([\br M]_{i,j}\), and \(\br M[:,i]\) denote its conjugate transpose, transpose, inverse, determinant, trace, Frobenius norm, \((i,j)\)-th entry, and \(i\)-th column, respectively. \(\br M\succeq 0\) means that \(\br M\) is positive semidefinite. \(\mathbb R^{x\times y}\) and \(\mathbb C^{x\times y}\) denote the spaces of \(x\times y\) real and complex matrices, respectively. \(\mathbb E\{\cdot\}\) denotes statistical expectation, and \(\Pr(\cdot)\) denotes probability. \(\|\br x\|\) denotes the Euclidean norm of vector \(\br x\), \(|x|\) denotes the magnitude of scalar \(x\), and \(\br I_n\) denotes the \(n\times n\) identity matrix. \(\mr j=\sqrt{-1}\) is the imaginary unit. The distribution of a circularly symmetric complex Gaussian (CSCG) random vector with mean \(\br 0\) and covariance matrix \(\br \Sigma\) is denoted by \(\mathcal{CN}(\br 0,\br \Sigma)\), \(\mathcal N(\br 0,\br \Sigma)\) denotes the real Gaussian distribution, \(\mathcal U(a,b)\) denotes the uniform distribution over \([a,b]\), \(\operatorname{Unif}(\mathcal S^2)\) denotes the uniform distribution on the unit sphere \(\mathcal S^2\), and \(\sim\) stands for ``distributed as''. \(\min\{a,b\}\) and \(\max\{a,b\}\) return the minimum and maximum of \(a\) and \(b\), respectively. \(\lfloor\cdot\rfloor\) denotes the flooring operation.
		
		\section{Channel Model \& Spatial Spectrum Synthesis}
		
		Consider a narrowband far-field continuous angular channel model.
		Let $\br{p} \in \mathbb{R}^3$ denote a position within a given region $\mathcal{C}$, and let $\br{\nu} \in \Omega$
		be a unit direction vector on the unit sphere $\Omega$.
		The received complex field at position $\br{p}$ can be written as
		\begin{align}\label{eq:cont_r_p}
			r(\mathbf p)=\int_{\Omega} a(\boldsymbol{\nu})\,
			e^{j\frac{2\pi}{\lambda}\boldsymbol{\nu}^{T}\mathbf p}\, d\Omega,
		\end{align}
		where $a(\br{\nu})$ is the 
		complex channel gain at direction $\br{\nu}$, $\lambda$ is the carrier wavelength, and $\br{\nu}$ is given by
		\begin{align}\label{eq:nu_ori_def}
			\br{\nu} = 
			[u, v, w]^T =
			[\cos \theta \cos \phi, \cos \theta \sin \phi, \sin \theta ]^T,
		\end{align}
		with $\phi \in \left[-\pi,\pi\right]$ and $\theta \in \left[-\pi/2,\pi/2\right]$.
		The continuous
		spatial Fourier transform of the field $r(\br{p})$ with respect to direction $\br{\nu}$ is defined as
		\begin{align}\label{eq:cont_SPS}
			R(\boldsymbol{\nu})
			=
			\int_{\mathcal{C}}
			r(\mathbf p)\,
			e^{-j\frac{2\pi}{\lambda}\boldsymbol{\nu}^{T}\mathbf p}\,
			d\mathbf p .
		\end{align}
		The continuous 3D spatial power spectrum is then  defined as
		\begin{align}
			P(\boldsymbol{\nu})	=
			\left|R(\boldsymbol{\nu})\right|^{2}.
		\end{align}
		
		\begin{figure}[t]
			\centering
			\includegraphics[width=1.7in]{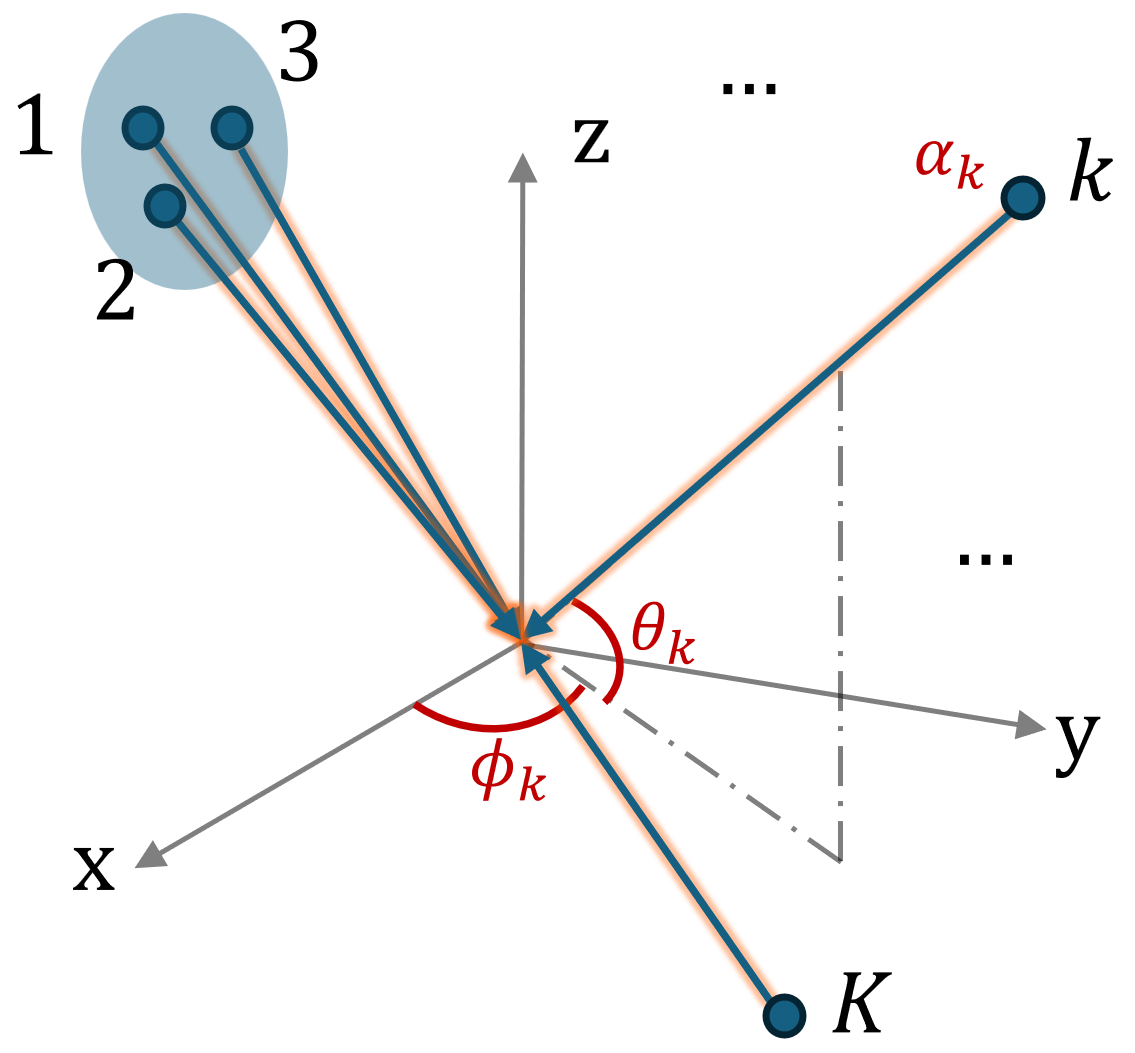}
			\centering
			\caption{Considered discrete channel model with $K$ MPCs.}
			\label{fig:plane_wave_def}
		\end{figure}
		To facilitate performance evaluation and analysis, as shown in Fig. \ref{fig:plane_wave_def}, we adopt the following discrete $K$-path channel model: 
		\begin{align}\label{eq:discrete_K_model}
			a(\boldsymbol{\nu})
			=\sum_{k=1}^{K} \alpha_k\,\delta_{\Omega}\!\left(\boldsymbol{\nu}-\boldsymbol{\nu}_k\right),
		\end{align}
		where $\alpha_k \in \mathbb{C}$ is the complex gain of the $k$-th path and $\br{\nu}_k \in \Omega$ is its angle-of-arrival (AoA) direction.
		Under the considered narrowband far-field model,
		substituting (\ref{eq:discrete_K_model}) into (\ref{eq:cont_r_p}) and using the sifting property of the Dirac delta function defined on $\Omega$ given by $
		\int_{\Omega} x(\br{\nu}) \delta _{\Omega} (\br{\nu}-\br{\nu}_k) d \Omega = x(\br{\nu}_k),$
		we have
		\begin{align}\label{eq:signal_model_K_discrete}
			r(\mathbf p)=
			\sum_{k=1}^{K}\alpha_k\,
			e^{j\frac{2\pi}{\lambda}\boldsymbol{\nu}_k^{T}\mathbf p}.
		\end{align}

Here, $r(\mathbf p)$ denotes the equivalent pilot-normalized complex baseband field that would be observed at position $\mathbf p$, 
with the known pilot symbol or known dedicated sensing signals absorbed into $\alpha_k$.
		In practice, the equivalent field sample $r(\mathbf p)$ is corrupted by receiver front-end noise. 
		Thus, we model the noisy sample associated with position $\br p$ as $\hat{r}(\br p)=r(\br p)+z(\br p)$,
		where $z(\br p)\sim\mathcal{CN}(0,\sigma^2)$ denotes the complex additive white Gaussian noise (AWGN) with zero mean and power $\sigma^2$, which is independent over $\br p$.
		
		\begin{figure}[t]
			\centering 
			\includegraphics[width=3.3in]{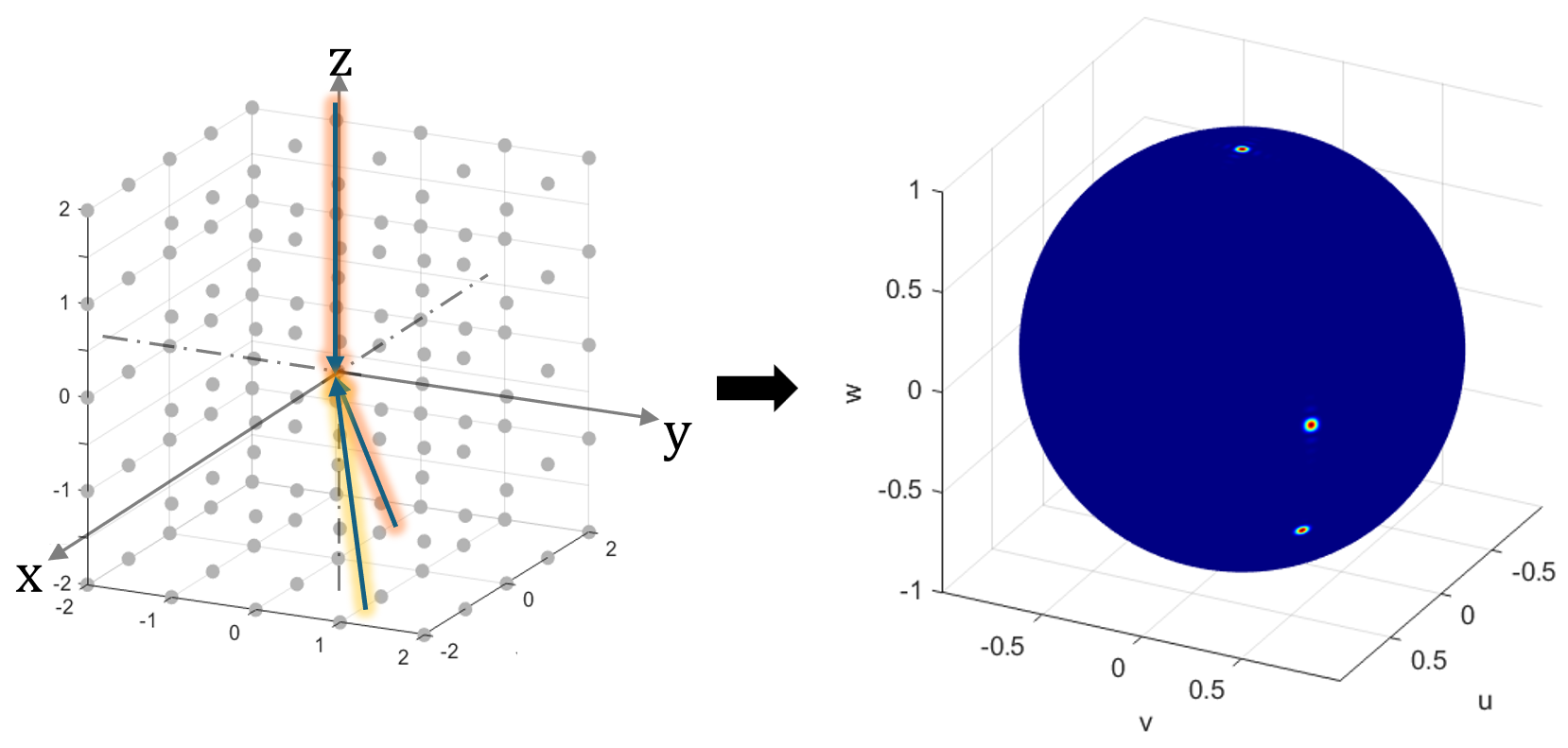}
			\centering
			\caption{An example of conventional full 3D spatial spectrum synthesized offline via forming virtual 3D cubic array with three MPCs at $(45^{\circ}, 5^{\circ}), (45^{\circ}, 90^{\circ})$, and $(45^{\circ}, -30^{\circ})$, respectively.}
			\label{fig:UPA_3D_full_SPP2}
		\end{figure} 
		
		\subsection{Conventional Full 3D Space Offline Synthesis}
		
		To evaluate (\ref{eq:cont_SPS}) over $\mathbb{R}^3$ in practice, it is conventional to adopt a finite 3D cubic region 
		$\mathcal{C} \subset \mathbb{R}^3$ whose center is located at the origin of the 3D Cartesian coordinate system, 
		and sample $r(\br{p})$ on a uniform $U \times U \times U$ grid\footnote{Without loss of generality, we assume $U$ to be an odd number for convenience.} \cite{medbo201560}, i.e., 
		\begin{align}\label{eq:samp_3D_cubic}
			&\br{p}_{n_x,n_y,n_z}  = \left[(n_x-l_{U})d,(n_y-l_{U})d,(n_z-l_{U})d\right]^T,\\
			\nonumber
			& n_x = 0,...,U-1, \enspace n_y = 0,...,U-1, \enspace n_z = 0,...,U-1,
		\end{align}
		where $l_U \triangleq \frac{U-1}{2}$,
		the total number of samples is $N = U^3$,
		and $d$ is the inter-element spacing.
		Then, the integral over $\mathcal{C}$ in (\ref{eq:cont_SPS}) can be approximated by a Riemann sum:
		\begin{align}
			\int_{\mathcal C}
			r(\mathbf p)\,e^{-j\frac{2\pi}{\lambda}\boldsymbol{\nu}^{T}\mathbf p}\,d\mathbf p
			\;\approx\;
			d^{3}\sum_{n=0}^{N-1}
			r(\mathbf p_n)\,e^{-j\frac{2\pi}{\lambda}\boldsymbol{\nu}^{T}\mathbf p_n}.
		\end{align}
		Accordingly, up to a scaling constant,
		the corresponding estimated discrete spatial power spectrum by considering the receiver front-end noise is given as  \cite{zhao2023nerf2}
		\begin{align}\label{eq:conventional_3D_SPSup}
			\mathbb{P}(\phi,\theta) \triangleq \widehat{P}(\boldsymbol{\nu}) =
			\frac{1}{N}
			\left|
			\sum_{n=0}^{N-1} 
			\hat{r}(\br{p}_n)
			e^{-j\frac{2\pi}{\lambda}\boldsymbol{\nu}^{T}\mathbf p_n}
			\right|^{2}.
		\end{align}
		Typically, the whole angular region $(\phi,\theta)$ is discretized into $Q_{\text{full}}$ bins. 
		Since physically building a gigantic 3D cubic phased array is generally impractical, the default approach is forming a virtual array by sequentially moving a single antenna element to each sampling position \(\br{p}_{n_x,n_y,n_z}\), explicitly measuring and storing the received signal samples \(\{\hat{r}(\br{p}_n)\}\), and then synthesizing the spatial spectrum offline based on (\ref{eq:conventional_3D_SPSup}).		
		As illustrated in Fig.~\ref{fig:UPA_3D_full_SPP2} and also demonstrated extensively via real channel environment measurements in \cite{medbo201560}, the resulting 3D cubic aperture provides full-space coverage and nearly uniform angular resolution, and thus its synthesized spectrum is adopted as the benchmarking ground truth in the sequel of this paper.
		However, the number of required samples scales as \(N=U^3\), leading to long measurement time and substantial positioning overhead even for synthesizing a single spectrum image.
		
		In the full-space SPS synthesis, the spatial sampling interval is slightly smaller than the conventional half-wavelength spacing \cite{medbo201560}.
		This choice is motivated by two considerations. First, for the 3D cubic sampling grid in (\ref{eq:samp_3D_cubic}), an incoming direction 
		\(\br s=[u,v,w]^T\) induces adjacent-sample phase progressions \(2\pi d u/\lambda\), \(2\pi d v/\lambda\), and \(2\pi d w/\lambda\) along the \(x\)-, \(y\)-, and \(z\)-axes, respectively. 
		If \(d=\lambda/2\), the two antipodal directions \((u,v,w)=(1,0,0)\) and \((-1,0,0)\) produce \(x\)-axis phase increments \(\pi\) and \(-\pi\), which are indistinguishable modulo \(2\pi\), while producing the same zero phase progressions along the \(y\)- and \(z\)-axes. 
		Thus, the corresponding full cubic-array steering responses become identical at these two antipodal boundary points; the same issue also occurs for the antipodal boundary directions along the \(y\)- and \(z\)-axes.
		By choosing \(d<\lambda/2\), all adjacent-sample phase progressions associated with \(u,v,w\in[-1,1]\) lie strictly inside \((-\pi,\pi)\).
		Second, while reducing \(d\) helps avoid this boundary ambiguity, an excessively small spacing would shrink the effective aperture of the sampled region for a fixed \(U\) 
		and degrade angular resolution. Therefore, \(d=0.47\lambda\) is adopted as a compromise that avoids end-fire boundary ambiguity
		while preserving an aperture close to that of the sampling grid with half-wavelength spacing in this paper.

		\subsection{Conventional Half 3D Space Online Synthesis}\label{sec:System_Review}

		\begin{figure}[t]
			\centering
			\includegraphics[width=3.3in]{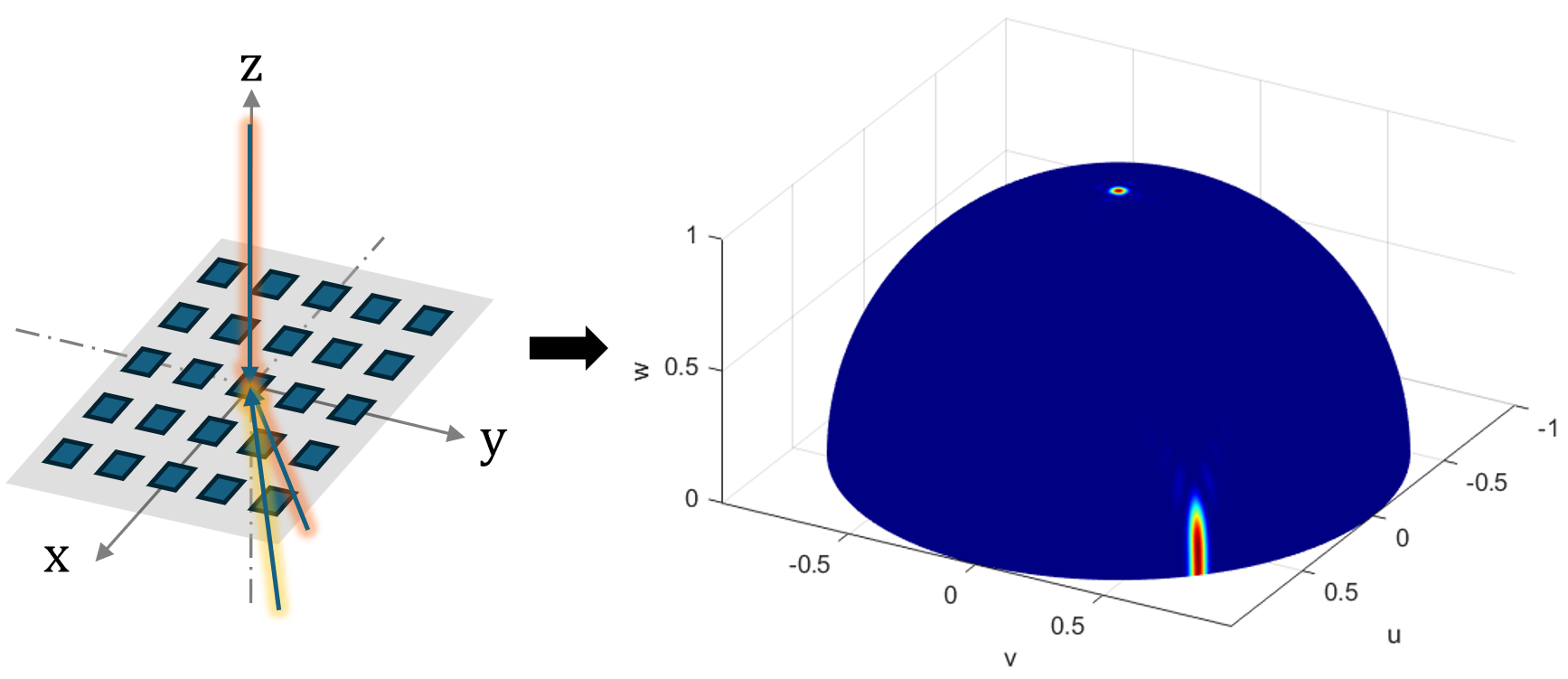}
			\centering 
			\caption{An example of conventional half 3D spatial spectrum synthesized online via UPA and analog receive combining with three MPCs at $(45^{\circ}, 5^{\circ}), (45^{\circ}, 90^{\circ})$, and $(45^{\circ}, -30^{\circ})$, respectively.}
			\label{fig:half_3D_SPP_example}
		\end{figure}

		
		
		
		To synthesize spatial spectrum more efficiently, most existing works apply analog receive combining and set $\{\br{p}_n\}$ to form a UPA \cite{zhao2023nerf2}, i.e.,
		\begin{align}\label{eq:samp_UPA}
			&\br{p}_{n_x,n_y}  = \left[(n_x-l_{U})d,(n_y-l_{U})d,0\right]^T, \\
			\nonumber
			& n_x = 0,1,...,U-1, \enspace n_y = 0,1,...,U-1,
		\end{align}
		where $N = U^2$. In typical analog-only beamforming architectures with a single RF chain, only the beamformed scalar output is acquirable; 
		thus the spectrum is synthesized online by steering the analog phase shifters.
		Discretizing the half-space angular region $(\phi,\theta)$ into $Q_{\text{half}}$ bins,
		to evaluate $\mathbb{P}(\phi,\theta)$ at a grid point $(\phi_j,\theta_j), j = 1,\ldots,Q_{\text{half}}$, the phased array configures its phase shifters to implement 
		weights $\{e^{ -\mr{j} \frac{2 \pi}{\lambda} \br{\nu}^T(\phi_j,\theta_j) \br{p}_n}\}_{n \in \mathcal{N}}$. 
		The power of the RF-chain combining output then directly yields the spectrum value at $(\phi_j, \theta_j)$. 
		Notice that although the single RF chain analog combining implementation requires repeated pilot or sensing signal transmissions during beam sweeping, the associated measurement latency
		 can be small due to fast electronic beam steering. This overhead is often acceptable for massive spatial spectrum acquisition over a large-scale channel 
		 stationarity interval, 
		especially considering the substantial reduction in the hardware cost of the RF chain in return.
		This renders online spectrum synthesis via analog combining a low-complexity and implementation-friendly solution for channel characterization.

		However, a UPA inherently cannot distinguish MPCs arriving from opposite sides of the array plane
		and thus practical implementation provides only half-space (hemisphere) coverage \cite{zhao2023nerf2}. 
		An illustrative example based on (\ref{eq:signal_model_K_discrete}) is shown in Fig. \ref{fig:half_3D_SPP_example}, where an MPC located behind the UPA (e.g., $\theta = -30^{\circ}$) 
		cannot be captured by the hemisphere spatial spectrum.	Moreover, the angular resolution becomes highly non-uniform when $\theta$ is close to zero due to shrinkage of the projected aperture.
		This limitation manifests as the elongated spectrum peak of the MPC at $(45^{\circ}, 5^{\circ})$ in Fig. \ref{fig:half_3D_SPP_example}.

		\section{Rotating ULA-enabled Online Synthesis}\label{sec:Proposed_RULA_CT} 
		
		To enable full-space 3D spectrum online synthesis with high angular resolution while avoiding dense 3D cubic sampling, 
		we jointly design the spatial sampling positions \(\{\br{p}_n\}\) and the spectrum synthesis rule. 
		In particular, instead of applying \eqref{eq:conventional_3D_SPSup} directly, we rotate a ULA to collect partial coherent sums over multiple orientations.
		When only the orientation-wise partial spectrum images are available, we fuse them through a pointwise minimum operator to synthesize the overall SPS.
		When the overall coherent sum across rotations is accessible and phase-consistent across orientations, we further develop a joint processing rule by accumulating all collected coherent sums before spectrum synthesis.
		Both synthesis rules are inspired by the conventional CT reconstruction techniques in the sense that they fuse multiple partial projections to reconstruct the full-space spectrum image.
		For the ease of the discussion, we ignore the AWGN term in the following.
		
		\begin{figure}[t]
			\centering
			\includegraphics[width=3.4in]{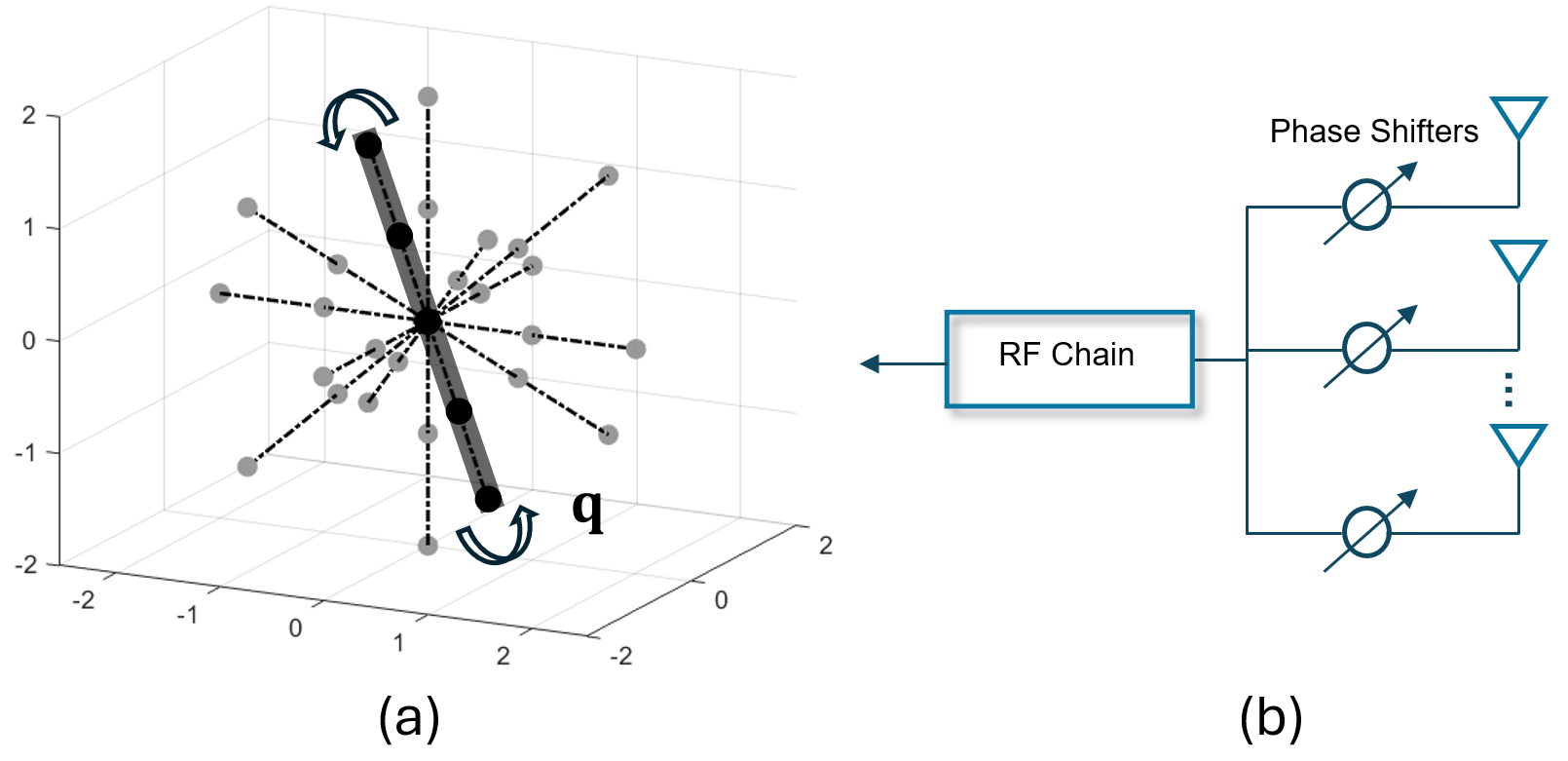}
			\centering
			\caption{(a). Sampling positions $\{\br{p}_n\}$ are obtained through rotating a ULA. (b). Underlying analog-only beamforming architecture.}
			\label{fig:RULA_sampling}
		\end{figure}
		Specifically, as illustrated in Fig. \ref{fig:RULA_sampling}(a), a ULA
		with \(U\) omnidirectional elements is rotated about its center located at the origin of a 3D Cartesian coordinate system. Each rotation is characterized by a unit direction vector as
		\begin{align}\label{eq:q_def}
			\br{q} = \left[q_x,q_y,q_z\right]^T, \enspace \|\br{q}\| = 1,
		\end{align}
		which specifies the orientation of the ULA axis. For example, \(\br{q}=[1,0,0]^T\) corresponds to a ULA aligned with the x-axis.
		At a given $\br{q}$, the sampling positions of the ULA are
		\begin{align}\label{eq:RULA_samp}
			\br{p}_i = \br{q} \left(i-l_U\right) d, \enspace i = 0,1,...,U-1.
		\end{align}
		The received signals along all the \(U\) elements at orientation \(\br q\) are given by
		\begin{align}
			\br r_{\br q}
			=
			\left[
			r(\br q(0-l_U)d),\ldots,
			r(\br q(U-1-l_U)d)
			\right]^T .
		\end{align}
		Substituting the discrete channel model in \eqref{eq:signal_model_K_discrete} into the samples collected along this ULA gives a standard 1D ULA observation. Specifically, define the projected spatial frequency of the \(k\)-th MPC along orientation \(\br q\) as
		\begin{align}\label{eq:mu_k_projection}
			\mu_k(\br q)
			\triangleq
			\br\nu_k^T\br q .
		\end{align}
		Then, the received signal at the \(i\)-th element becomes
		\begin{align}
			r(\br q(i-l_U)d)
			=
			\sum_{k=1}^{K}
			\alpha_k
			e^{\mr j\frac{2\pi}{\lambda}
			\mu_k(\br q)(i-l_U)d},
		\end{align}
		which is exactly the conventional 1D beamforming model with spatial frequencies \(\{\mu_k(\br q)\}\). 
		With \(Q_{\text{full}}\) denoting the number of angular grid points in the \((\phi,\theta)\) domain and \(\br\nu_j=\br\nu(\phi_j,\theta_j)\) the \(j\)-th grid direction, we further define the orientation-dependent 1D steering matrix \(\br A_{\br q}\in\mathbb C^{Q_{\text{full}}\times U}\) as
		\begin{align}\label{eq:RULA_Aq_def}
			[\br A_{\br q}]_{j,i+1}
			=
			e^{-\mr j\frac{2\pi}{\lambda}
			\br\nu_j^T\br q(i-l_U)d},
			\quad i=0,\ldots,U-1 .
		\end{align}
		Based on the analog beamforming architecture shown in Fig.~\ref{fig:RULA_sampling}(b), at each orientation \(\br q\), the ULA steers the beam towards each angular grid point \((\phi_j,\theta_j)\) by setting the analog phase shifters to implement weights \(\{[\br A_{\br q}]_{j,i+1}\}_{i=0}^{U-1}\). It then performs 1D analog receive combining to obtain partial coherent sums \([\br A_{\br q}\br r_{\br q}]_j\) associated with the angular grid point \((\phi_j,\theta_j)\). Depending on whether the overall coherent sum across orientations is accessible or not, we propose two spectrum synthesis rules to yield the overall 3D spatial spectrum, namely the pointwise minimum fusion and the joint coherent fusion, which are detailed in the following.
		
		\subsubsection{Pointwise Minimum Fusion}
		Once the partial coherent sums at each orientation are obtained, the partial spectrum image associated with orientation \(\br q\) is given as
		\begin{align}\label{eq:partial_image}
			\mathcal P_{\br q}(\phi_j,\theta_j)
			=
			\frac{1}{U}
			\left|
			[\br A_{\br q}\br r_{\br q}]_j
			\right|^2,
			\quad j=1,\ldots,Q_{\text{full}}.
		\end{align}
		Equivalently, we have
		\begin{align}
			\mathcal{P}_{\br{q}}(\phi_j,\theta_j) 
			& =  \frac{1}{U} \left|\sum\limits_{i=0}^{U-1}  r(\br{q} \left(i-l_U\right) d) e^{ -\mr{j} \frac{2 \pi}{\lambda} \br{\nu}^T_j \br{q} \left(i-l_U\right) d}\right|^2, \nonumber\\
			j&=1,\ldots,Q_{\text{full}}.
		\end{align}
		Let \(\mathcal P_{\br q}(\phi,\theta)\triangleq\{\mathcal P_{\br q}(\phi_j,\theta_j)\}_{j=1}^{Q_{\text{full}}}\). Compared with the UPA-based online approach in Section~\ref{sec:System_Review}, each \(\mathcal P_{\br q}(\phi,\theta)\) is generated from a ULA 
		with 1D aperture, which yields high per-axis resolution while
		retaining the low-complexity hardware as a single RF chain with $U$ elements.
		\begin{figure*}[t]
				\setlength{\abovecaptionskip}{+0mm}
				\setlength{\belowcaptionskip}{+0mm}
				\centering
				\includegraphics[width=\textwidth]{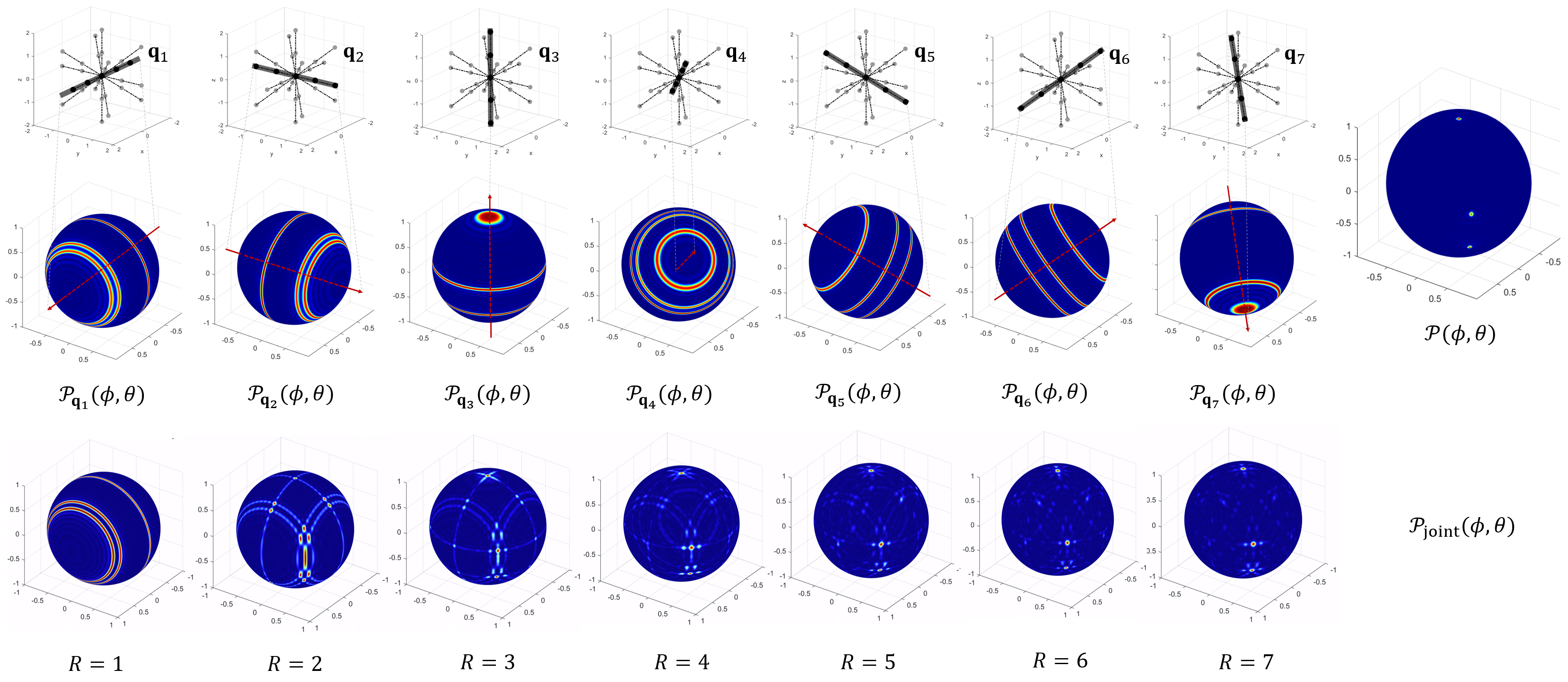}
				\centering
				\caption{Proposed RULA-enabled CT technique for 3D spectrum online synthesis. The top row illustrates the adopted orientations $\mathbf{q}_r$, the middle row shows the corresponding partial slice-like spectra $\mathcal{P}_{\mathbf{q}_r}(\phi,\theta)$ and the pointwise minimum fused spectrum $\mathcal{P}(\phi,\theta)$, and the bottom row demonstrates the progressive joint coherent fusion rule from $R=1$ to $R=7$, yielding the joint coherent fused spectrum $\mathcal{P}_{\mathrm{joint}}(\phi,\theta)$.}
				\label{fig:Joint_CT}
		\end{figure*}
		The overall spectrum is synthesized by fusing the partial images via a pointwise minimum over each $(\phi_j,\theta_j)$ as
		\begin{align}\label{eq:overall_image}
			\mathcal{P}(\phi_j,\theta_j) = \min_{\br{q} \in \mathcal Q} \{\mathcal{P}_{\br{q}}(\phi_j,\theta_j)\}, \enspace j=1,\ldots,Q_{\text{full}},
		\end{align}
		where \(\mathcal Q\triangleq\{\br q_1,\ldots,\br q_R\}\) denotes the set of the adopted $R$ orientations. Intuitively, \(\mathcal P_{\mathbf q}(\phi,\theta)\) contains true MPC peaks that are consistent across orientations, while many sidelobe artifacts are orientation-dependent, hence the minimum operator tends to preserve consistent peaks and suppress inconsistent sidelobes.
		To be specific, under the far-field channel model in \eqref{eq:signal_model_K_discrete}, if the MPCs are well separated and the angular grid is sufficiently fine, 
		then at the true direction \((\phi_k,\theta_k)\) of the \(k\)-th MPC, the phase terms align and yield
		\begin{align}\label{eq:peak_alignment}
			\mathcal P_{\mathbf q}(\phi_k,\theta_k)\approx U|\alpha_k|^2,\quad \forall\,\mathbf q\in\mathcal Q,
		\end{align}
		and therefore \(\mathcal P(\phi_k,\theta_k)\approx U|\alpha_k|^2\), which preserves the relative peak powers of the MPCs. Moreover, this rule only requires the powers of the orientation-wise partial coherent sums, and is therefore insensitive to an unknown orientation-dependent common output phase offset. Specifically, with \(\br A_m\triangleq \br A_{\br q_m}\) and \(\br r_m\triangleq \br r_{\br q_m}\), let \(s_{m,j}\triangleq[\br A_m\br r_m]_j\) denote the ideal partial coherent sum at orientation \(\br q_m\). If the analog-combining chain introduces a common output phase offset \(\psi_m\) at this orientation, the observed coherent sum becomes
		\begin{align}\label{eq:phase_offset}
			\tilde{s}_{m,j}=e^{\mr j\psi_m}s_{m,j}.
		\end{align}
		Since \(\frac{1}{U}|\tilde{s}_{m,j}|^2=\frac{1}{U}|s_{m,j}|^2\), such an offset is removed before fusing different orientations.

		\subsubsection{Joint Coherent Fusion}
		If the common phase shift can be estimated and calibrated, the overall coherent sum across orientations is then accessible.
		Towards this end, the joint coherent fusion can be adopted. Specifically, define
		\begin{align}\label{eq:joint_A_r}
			\br y_{\rm joint}
			=
			\sum_{m=1}^{R}\br A_m\br r_m
			-
			(R-1)r_c\br 1,
		\end{align}
		where \(r_c\) is the common center-position sample shared by all rotated ULAs. The subtraction avoids counting the center sample repeatedly. 
		The corresponding joint-processing SPS is
		\begin{align}\label{eq:joint_SPS}
			\mathcal P_{\rm joint}(\phi_j,\theta_j)
			=
			\frac{1}{RU}
			\left|
			[\br y_{\rm joint}]_j
			\right|^2,
			\quad j=1,\ldots,Q_{\text{full}}.
		\end{align}
		Unlike the pointwise minimum fusion rule, this joint fusion rule requires coherent addition across different orientations. Therefore, unknown inter-orientation phase shifts can distort the accumulated coherent sum. However, this coherent fusion yields better synthesized spectrum when the SNR is low, as it effectively combines the signal power across orientations before taking the magnitude square.

		An illustrative example with 7 orientations based on the minimum fusion rule in (\ref{eq:overall_image}) and the joint coherent fusion rule in (\ref{eq:joint_SPS}) is provided in Fig.~\ref{fig:Joint_CT} 
		with MPCs being the same as previous example to facilitate comparison.
		Notice that each partial 3D image in the second row is fundamentally a 1D spatial spectrum lifted onto the 3D angular sphere through the projection \(\mu=\br\nu^T(\phi,\theta)\br q\).

		To conclude this section, it is worth comparing different practical implementation architectures for the aforementioned sampling geometries, 
		as summarized in Table~\ref{tab:implementation_arch}. 
		For the 3D cubic sampling geometry, the single-antenna implementation is feasible by sequentially moving one antenna element over all \(U^3\) 
		positions, which is exactly a virtual-array measurement approach \cite{medbo2016frequency}. 
		However, implementing it as a physical fully-digital or analog-combining array would require constructing a gigantic \(U\times U\times U\) 
		3D antenna array, together with \(U^3\) RF chains or phase-shifter branches, which is generally impractical.
		For the proposed RULA sampling geometry, it is worth noting that the overall SPS acquisition time is typically dominated by mechanical ULA orientation switching rather than per-orientation beam sweeping or received signal measurement.
		Thus, the proposed analog combining implementation can achieve a similar overall SPS acquisition time to the fully digital implementation when mechanical orientation switching is the dominant bottleneck, while significantly reducing the number of RF chains from \(U\) to 1.

		\begin{table*}[t]
			\centering
			\caption{Implementation feasibility and requirements of different spatial sampling strategies.}
			\label{tab:implementation_arch}
			\renewcommand{\arraystretch}{1.18}
			\scriptsize
			\begin{tabular}{|c|c|c|c|}
				\hline
				\textbf{Sampling Geometry} &
				\textbf{\shortstack{Single RF chain, single antenna}} &
				\textbf{\shortstack{Fully digital, multi-antenna}} &
				\textbf{\shortstack{Analog combining, multi-antenna}} \\
				\hline
				\textbf{3D cubic} &
				\(\checkmark\) \shortstack{Sequential virtual array\\Antenna movement: \(U^3\)} &
				\(\times\) \shortstack{Requires physical \(U^3\)-antenna array\\and \(U^3\) RF chains} &
				\(\times\) \shortstack{Requires physical \(U^3\)-antenna array\\and \(U^3\) analog branches} \\
				\hline
				\textbf{UPA} &
				\(\checkmark\) \shortstack{Sequential virtual array\\Antenna movement: \(U^2\)} &
				\(\checkmark\) \shortstack{\(U^2\) antennas, \(U^2\) RF chains\\Antenna movement: 0} &
				\(\checkmark\) \shortstack{\(U^2\) antennas, single RF chain\\Antenna movement: 0} \\
				\hline
				\textbf{RULA} &
				\(\checkmark\) \shortstack{Sequential virtual array\\Antenna movement: \(RU\)} &
				\(\checkmark\) \shortstack{\(U\) antennas, \(U\) RF chains\\Antenna movement: \(R\)} &
				\(\checkmark\) \shortstack{\(U\) antennas, single RF chain\\Antenna movement: \(R\)} \\
				\hline
			\end{tabular}
		\end{table*}
		
		\begin{remark}
			\emph{The design intuition is that, for a fixed number of sampled positions, a ULA maximizes angular resolution
			 along its 1D aperture. Full 3D coverage is then achieved by rotating the ULA to probe multiple aperture orientations.
			  As the number of orientations increases, the sampling set approaches a densely sampled ball-like 3D volume, which is 
			  similar to a virtual 3D cubic array. The practically relevant question is therefore how to design a \emph{small} 
			  set \(\mathcal Q\) that can faithfully recover the full-space spectrum while keeping the sampling overhead low, which is addressed in detail in the next section.}
		\end{remark}
		

		\section{Environment-agnostic Orientation Design}\label{sec:metric_initial_orientation_design}

		The orientation design should be environment-agnostic in the sense that it should not rely on any prior information about the surrounding wireless environment or any specific channel realization.
		As shown in Section~\ref{sec:Proposed_RULA_CT}, each orientation distinguishes MPCs through their projected spatial frequencies in \eqref{eq:mu_k_projection}.
		Therefore, to design a small set \(\mathcal Q\) in a principled manner, a natural criterion is to make potentially close MPC pairs well separated across the selected projections.
		Towards this end, we propose the following primary goal and secondary goal for the orientation design.
		The primary goal is to maximize the separability of possible MPC pairs in the projected 1D spatial-frequency domain.
		After this primary separation goal is met, the secondary goal is to reduce redundancy of the orientation set when measuring each individual MPC.

		\subsection{Maximizing Expected Worst-case Squared Projected Separation}

		Specifically, in the primary goal, we model the MPC direction vectors in \eqref{eq:nu_ori_def} as random unit vectors and denote them by \(\br s_k\equiv\br\nu_k=[u_k,v_k,w_k]^T\), \(k=1,\ldots,K\), for this statistical design analysis.
		Suppose that \(\{\br s_k\}_{k=1}^{K}\) are independently and uniformly distributed on the unit sphere.
		For a candidate orientation set \(\mathcal Q=\{\br q_i\}_{i=1}^{R}\), define the squared projected separation between two MPCs as
		\begin{align}\label{eq:orientation_pair_distance}
			d_{k\ell}^2(\mathcal Q)
			\triangleq
			\sum_{i=1}^{R}
			\left(\br q_i^T(\br s_k-\br s_\ell)\right)^2,\quad k\ne \ell .
		\end{align}
		This is exactly the distance induced by the collection of 1D projections measured by the rotated ULA. 
		If \(d_{k\ell}^2(\mathcal Q)\) is small, then the two directions have nearly identical projected spatial frequencies over all adopted orientations and are therefore difficult to distinguish.
		In particular, when \(\br q_i^T\br s_k=\br q_i^T\br s_\ell\), these two MPCs are indistinguishable along the \(i\)-th orientation, and as a result, \(\left(\br q_i^T(\br s_k-\br s_\ell)\right)^2=0\), as shown in Fig.~\ref{fig:equal_projection_ring}.
		\begin{figure}[t]
			\centering
			\begin{tikzpicture}[line cap=round,line join=round,>=stealth,scale=0.70]
				\coordinate (O) at (0,0);
				\coordinate (Qtop) at (0,2.25);
				\coordinate (Qbot) at (0,-1.95);
				\coordinate (C) at (0,0.55);
				\coordinate (Sk) at (-1.05,0.29);
				\coordinate (Sl) at (1.05,0.29);
				
				\draw[gray!55,thick] (O) circle (1.72);
				\draw[gray!35,dashed] (-1.72,0) -- (1.72,0);
				\draw[->,thick] (Qbot) -- (Qtop) node[above] {$\br q_i$};
				
				\fill[blue!12,opacity=0.75] (C) ellipse (1.63 and 0.34);
				\draw[blue!60!black,thick] (C) ellipse (1.63 and 0.34);
				\draw[blue!60!black,dashed] (-1.63,0.55) -- (1.63,0.55);
				\node[blue!60!black,align=center] at (1.53,1.18) {\footnotesize equal-projection\\[-1mm]\footnotesize ring};
				
				\draw[->,red!75!black,thick] (O) -- (Sk) node[left] {$\br s_k$};
				\draw[->,red!75!black,thick] (O) -- (Sl) node[right] {$\br s_\ell$};
				\fill[red!75!black] (Sk) circle (1.7pt);
				\fill[red!75!black] (Sl) circle (1.7pt);
				
				\draw[red!65!black,thick] (Sk) -- (Sl);
				\draw[densely dotted,gray!70] (Sk) -- (0,0.29);
				\draw[densely dotted,gray!70] (Sl) -- (0,0.29);
				\fill[black] (O) circle (1.3pt) node[below left] {\footnotesize $\br 0$};
				\node[red!65!black,align=left] at (0.96,-0.35) {\footnotesize $\br s_k-\br s_\ell\perp \br q_i$};
				\node[align=center] at (0,-2.25) {\footnotesize $\br q_i^T\br s_k=\br q_i^T\br s_\ell$};
			\end{tikzpicture}
			\caption{Illustration of two MPC directions having identical projection along orientation \(\br q_i\). Points with the same value of \(\br q_i^T\br s\) lie on a ring obtained by intersecting the unit sphere with a plane perpendicular to \(\br q_i\). Hence \(\br q_i^T(\br s_k-\br s_\ell)=0\), giving zero squared projected separation along this orientation.}
			\label{fig:equal_projection_ring}
			\vspace{-10pt}
		\end{figure}
		We then design \(\mathcal Q\) by the max-min criterion as
		\begin{align}\label{eq:init_orientation_design}
			\mathcal Q^{\star}
			=
			\arg\max_{\{\br q_i:\|\br q_i\|=1\}}
			\mathbb E_{\{\br s_k\}}
			\left[
			\min_{k\ne \ell} d_{k\ell}^2(\mathcal Q)
			\right],
		\end{align}
		where the expectation is taken over $\{\br s_k\} \triangleq \br s_1  \times ... \times \br s_K$, 
		where each $\br s_k$ is independently and uniformly distributed on the unit sphere \(\mathcal S^2\), i.e., \(\br s_k\sim \operatorname{Unif}(\mathcal S^2)\).
		 Equivalently, if \(\br g_k\sim\mathcal{N}(\br 0,\br I_3)\), then \(\br s_k=\br g_k/\|\br g_k\|\). This isotropic distribution satisfies
		\begin{align}
			\mathbb E[\br s_k]=\br 0,\qquad
			\mathbb E[\br s_k\br s_k^T]=\frac{1}{3}\br I_3 .
		\end{align}
		Since \(\br q\) and \(-\br q\) correspond to the same physical ULA axis, the candidate pool is generated over one
		hemisphere, with a fixed sign convention to avoid duplicate axes.

		To solve the problem in (\ref{eq:init_orientation_design}), let
		\begin{align}
			\br G(\mathcal Q)=\sum_{i=1}^{R}\br q_i\br q_i^T .
		\end{align}
		Then
		\begin{align}
			d_{k\ell}^2(\mathcal Q)
			=
			(\br s_k-\br s_\ell)^T
			\br G(\mathcal Q)
			(\br s_k-\br s_\ell).
		\end{align}
		For any feasible orientation set, \(\br G(\mathcal Q)\succeq \br 0\) and
		\begin{align}
			\operatorname{tr}\{\br G(\mathcal Q)\}
			=
			\sum_{i=1}^{R}\|\br q_i\|^2
			=
			R.
		\end{align}
		This motivates a semidefinite relaxation of \eqref{eq:init_orientation_design}. 
		Specifically, instead of optimizing directly over the finite unit-vector decomposition \(\br G=\sum_{i=1}^{R}\br q_i\br q_i^T\), 
		we keep only the necessary conditions \(\br G\succeq \br 0\) and \(\operatorname{tr}\{\br G\}=R\). 
		We then have the following relaxed optimization problem as
		\begin{align}\label{eq:relaxed_G_design}
			\max_{\br G\succeq \br 0}\quad
			F(\br G)
			\quad
			\text{s.t.}\quad
			\operatorname{tr}\{\br G\}=R,
		\end{align}
		where
		\begin{align}
			F(\br G)
			=
			\mathbb E_{\{\br s_k\}}
			\left[
			\min_{k\ne \ell}
			(\br s_k-\br s_\ell)^T
			\br G
			(\br s_k-\br s_\ell)
			\right].
		\end{align}
		The following proposition identifies the global optimal solution of \eqref{eq:relaxed_G_design}.
		\begin{proposition}\label{prop:isotropic_G_optimal}
			\emph{For the relaxed problem in \eqref{eq:relaxed_G_design}, a globally optimal solution is given as:}
			\begin{align}\label{eq:isotropic_G_optimal}
				\br G^\star
				=
				\frac{R}{3}\br I_3 .
			\end{align}
		\end{proposition}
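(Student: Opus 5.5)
The plan has two ingredients: $F$ is concave on the feasible set, and $F$ is invariant under orthogonal conjugation. Together they let us average any feasible $\br G$ over the rotation group without decreasing $F$, and that average is exactly $\frac{R}{3}\br I_3$.

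\emph{Concavity.} For each fixed realization $\{\br s_k\}$, the map $\br G\mapsto (\br s_k-\br s_\ell)^T\br G(\br s_k-\br s_\ell)$ is linear in $\br G$. A pointwise minimum over the finitely many pairs $k\ne \ell$ of linear functions is concave. Taking the expectation preserves concavity, so $F$ is concave on the convex feasible set $\{\br G\succeq\br 0,\ \operatorname{tr}\{\br G\}=R\}$. The expectation is finite because each term is bounded by $4\lambda_{\max}(\br G)\le 4R$.

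\emph{Rotation invariance.} Fix any orthogonal $\br U\in\mathbb R^{3\times3}$. Then
\begin{align}
(\br s_k-\br s_\ell)^T\br U\br G\br U^T(\br s_k-\br s_\ell)=(\br U^T\br s_k-\br U^T\br s_\ell)^T\br G(\br U^T\br s_k-\br U^T\br s_\ell).
\end{align}
Since $\br s_k\sim\operatorname{Unif}(\mathcal S^2)$ independently, the tuple $\{\br U^T\br s_k\}$ has the same joint law as $\{\br s_k\}$. Hence $F(\br U\br G\br U^T)=F(\br G)$. The conjugated matrix $\br U\br G\br U^T$ is also feasible, since it remains positive semidefinite and has the same trace.

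\emph{Symmetrization.} Take any feasible $\br G$ and let $\br U$ be drawn from the Haar measure on $O(3)$. Define $\bar{\br G}=\mathbb E_{\br U}[\br U\br G\br U^T]$. By Jensen's inequality for the concave function $F$, followed by invariance,
\begin{align}
F(\bar{\br G})\ge \mathbb E_{\br U}[F(\br U\br G\br U^T)]=F(\br G).
\end{align}
The matrix $\bar{\br G}$ commutes with every orthogonal matrix. By Schur's lemma, or more elementarily by averaging over sign flips and coordinate permutations only, $\bar{\br G}=c\br I_3$. The trace is preserved, so $c=R/3$. Therefore $F(\frac R3\br I_3)\ge F(\br G)$ for every feasible $\br G$, which proves global optimality of $\br G^\star$.

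\emph{Main obstacle.} The delicate step is justifying Jensen's inequality over a continuous group average. One could instead average over the finite group of signed permutation matrices, which has 48 elements. Its commutant already consists of scalar multiples of the identity: sign flips kill the off-diagonal entries, and permutations equalize the diagonal entries. Jensen's inequality then reduces to ordinary finite concavity,
\begin{align}
F\Bigl(\tfrac{1}{48}\sum_{\br U}\br U\br G\br U^T\Bigr)\ge\tfrac1{48}\sum_{\br U}F(\br U\br G\br U^T)=F(\br G),
\end{align}
which avoids measure-theoretic issues entirely.
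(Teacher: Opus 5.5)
Your proposal is correct and follows essentially the same route as the paper's proof. Both use concavity of $F$ as an expected pointwise minimum of linear functionals, invariance of $F$ and of the feasible set under orthogonal conjugation, Jensen's inequality applied to the Haar-averaged matrix $\bar{\br G}$, and the identification $\bar{\br G}=\frac{R}{3}\br I_3$ from rotational invariance and the trace constraint. Your alternative of averaging over the $48$ signed permutation matrices is the same symmetrization idea, but it reduces Jensen's inequality to finite concavity and makes the ``invariant matrices are scalar'' step fully explicit.
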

		\begin{proof}
			Please refer to Appendix~\ref{app:proof_isotropic_G}.
		\end{proof}

		The remaining problem is how to decompose \(\br G^\star\) into a finite unit-vector sum \(\sum_{i=1}^{R}\br q_i\br q_i^T\). It turns out that, as long as \(R\ge 3\), the isotropic matrix in \eqref{eq:isotropic_G_optimal} can always be exactly realized by a finite orientation set.
		\begin{proposition}\label{prop:tight_frame_existence}
			\emph{For any integer \(R\ge 3\), there exists a set of unit-norm orientations \(\mathcal Q=\{\br q_i\}_{i=1}^{R}\) such that}
			\begin{align}\label{eq:tight_frame_existence}
				\sum_{i=1}^{R}\br q_i\br q_i^T
				=
				\frac{R}{3}\br I_3 .
			\end{align}
		\end{proposition}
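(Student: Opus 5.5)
We construct the unit-norm tight frame explicitly, by induction on \(R\) from two base families. The first base is \(R=3\), where the standard basis \(\{\br e_1,\br e_2,\br e_3\}\) gives \(\sum_i\br e_i\br e_i^T=\br I_3=\frac{3}{3}\br I_3\).

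The second base is a planar-cone family that works for every \(R\ge 3\) directly. For fixed \(R\ge 3\), take
\begin{align}
\br q_i=\left[\sin\beta\cos\tfrac{2\pi i}{R},\ \sin\beta\sin\tfrac{2\pi i}{R},\ \cos\beta\right]^T,\quad i=1,\ldots,R,
\end{align}
with \(\cos^2\beta=1/3\). These are unit vectors. We compute \(\sum_i\br q_i\br q_i^T\) entrywise using the discrete trigonometric sums, which hold for \(R\ge 3\):
\begin{align}
\sum_{i}\cos\tfrac{2\pi i}{R}=\sum_{i}\sin\tfrac{2\pi i}{R}=0,\qquad \sum_{i}\cos\tfrac{4\pi i}{R}=\sum_{i}\sin\tfrac{4\pi i}{R}=0.
\end{align}
These sums give the following entries.
\begin{itemize}
\item The \((1,1)\) and \((2,2)\) entries each equal \(\frac{R}{2}\sin^2\beta\).
\item The \((1,2)\) entry is \(\frac{\sin^2\beta}{2}\sum_i\sin\frac{4\pi i}{R}=0\).
\item The \((1,3)\) and \((2,3)\) entries vanish by the first-harmonic sums.
\item The \((3,3)\) entry is \(R\cos^2\beta\).
\end{itemize}
With \(\cos^2\beta=1/3\) we have \(\sin^2\beta=2/3\), so every diagonal entry equals \(R/3\). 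Hence \(\sum_i\br q_i\br q_i^T=\frac{R}{3}\br I_3\), which settles the statement for every \(R\ge 3\).

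The main obstacle is the second-harmonic sum \(\sum_i e^{\mr j 4\pi i/R}=0\). It requires \(R\nmid 2\), i.e.\ \(R\ne 1,2\), and this is precisely where the hypothesis \(R\ge 3\) enters. The condition is also necessary: for \(R\le 2\) the rank of \(\sum_i\br q_i\br q_i^T\) is at most \(R<3\), so it cannot equal a multiple of \(\br I_3\).

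As an alternative cross-check, one can instead build larger frames by induction. The union of two tight frames with sizes \(R_1,R_2\) is a tight frame of size \(R_1+R_2\). Combining the \(R=3\) basis with constructions for \(R=4,5\) (for example, the four tetrahedral directions for \(R=4\)) would then cover all \(R\ge 3\). The cone construction makes this unnecessary.
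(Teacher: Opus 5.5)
Your proof is correct and is essentially the paper's own construction: $R$ unit vectors on the cone $\cos^2\beta = 1/3$ with uniformly spaced azimuths, checked entrywise via the first- and second-harmonic sums over the $R$-th roots of unity. You also make explicit two things the paper only asserts: the second-harmonic identity is exactly where $R\ge 3$ is needed, and $R\le 2$ is ruled out by rank. Your opening framing as an ``induction'' is superfluous, since the cone family alone covers every $R\ge 3$.
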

		\begin{proof}
			Please refer to Appendix~\ref{app:proof_tight_frame_existence}.
		\end{proof}
		Therefore, the solution in \eqref{eq:isotropic_G_optimal} is not merely the global optimum of a relaxed problem. When \(R\ge 3\), it is also attainable by an actual finite unit-norm orientation set. Such a set is also known as a unit-norm tight frame in \(\mathbb R^3\).

		The corresponding optimal objective value also admits a useful interpretation, summarized as follows.
		\begin{proposition}\label{prop:isotropic_objective_value}
			\emph{Let \(
				D_{\min}
				\triangleq
				\min_{k\ne \ell}
				\|\br s_k-\br s_\ell\|^2\)}
			\emph{denote the minimum squared Euclidean distance among the \(K\) random MPC directions. Then}
			\emph{for \(K=2\),}
			\begin{align}
				F(\br G^\star)=\frac{2R}{3}.
			\end{align}
			\emph{Moreover, with \(N_p=K(K-1)/2\), for any fixed \(x\ge 0\),}
			\begin{align}
				\Pr(N_pD_{\min}>x)
				\rightarrow
				e^{-x/4},
				\quad K\rightarrow\infty,
			\end{align}
			\emph{or equivalently \(N_pD_{\min}\) converges in distribution to an exponential random variable with rate \(1/4\). Consequently, for sufficiently large \(K\),}
			\begin{align}\label{eq:isotropic_F_largeK}
				F(\br G^\star)
				\approx
				\frac{8R}{3K(K-1)}.
			\end{align}
		\end{proposition}
		\begin{proof}
			Please refer to Appendix~\ref{app:proof_isotropic_objective_value}.
		\end{proof}
		Equation \eqref{eq:isotropic_F_largeK} provides an intuitive scaling law for the orientation design. The optimal objective value $F(\br G^\star)$ increases linearly with \(R\), since each additional orientation contributes one more projection dimension and enlarges the aggregate projected separation between MPCs. 
		In contrast, the objective decreases approximately as \(1/K^2\) when the number of MPCs grows. For $K$ MPCs, there exists \(K(K-1)/2\) MPC pairs, making it increasingly likely that at least
		 one pair is very close on the sphere. 
		Equation \eqref{eq:isotropic_G_optimal} shows that a well-balanced orientation set should make \(\br G(\mathcal Q)\) 
		as close as possible to \(\frac{R}{3}\br I_3\). For \(R<3\), this target is impossible since \(\operatorname{rank}\{\br G(\mathcal Q)\}\le R\). This case is trivial and is thus not our focus.
		For \(R\ge 3\), Proposition~\ref{prop:tight_frame_existence} guarantees the existence of a finite 
		 orientation set that achieves the isotropic target, and thus the relaxed optimum is attainable. 
		 Nevertheless, consider the case where we repeat the three coordinate axes twice to yield \(\br G(\mathcal Q)=2\br I_3\) 
		 when \(R=6\), it is clear that this trivial design does not provide six distinct orientations. In the following,
		 we propose a second design goal to avoid such redundant orientation design that is indistinguishable at the relaxed matrix level.


		\subsection{Minimizing Worst-case Projective Correlation}
		 The second design goal plays a different role from the primary objective. Once \(\br G(\mathcal Q)=\frac{R}{3}\br I_3\) is fixed, all orientation realizations satisfying this isotropic-matrix condition yield the same aggregate projected separation value for every MPC pair. 
		 The remaining degree of freedom lies in how the individual 1D projections are arranged. For an isotropically distributed direction \(\br s\), the projected frequencies observed by two orientations are \(\br q_i^T\br s\) and \(\br q_j^T\br s\), whose correlation is
		\begin{align}
			\frac{
			\mathbb E[(\br q_i^T\br s)(\br q_j^T\br s)]
			}{
			\sqrt{
			\mathbb E[(\br q_i^T\br s)^2]
			\mathbb E[(\br q_j^T\br s)^2]
			}}
			=
			\br q_i^T\br q_j,
		\end{align}
		where we use \(\mathbb E[\br s\br s^T]=\frac{1}{3}\br I_3\). Thus, \(|\br q_i^T\br q_j|^2\) is the squared correlation between the two projected frequencies observed by orientations \(\br q_i\) and \(\br q_j\). If \(|\br q_i^T\br q_j|^2\) is close to 1, then the two orientations are nearly parallel and thus give highly redundant measurements that should be avoided. We therefore minimize the worst-case projective correlation to reduce the largest measurement redundancy subject to the already optimized isotropic-matrix constraint and the orientation unit-norm constraints, which is formulated as
		\begin{subequations}\label{eq:two_layer_orientation_design}
			\begin{align}
				(\text{P1}): \min_{\{\br q_i\}}
				\quad&
				\mu(\mathcal Q)
				\triangleq
				\max_{i\ne j}|\br q_i^T\br q_j|^2
				\label{eq:two_layer_orientation_design_obj}\\
				\text{s.t.}\quad&
				\sum_{i=1}^{R}\br q_i\br q_i^T
				=
				\frac{R}{3}\br I_3,
				\label{eq:two_layer_orientation_design_tf}\\
				&
				\|\br q_i\|=1,\quad i=1,\ldots,R.
				\label{eq:two_layer_orientation_design_unit}
			\end{align}
		\end{subequations}
		The absolute inner product is used because \(\br q\) and \(-\br q\) correspond to the same physical ULA axis. Proposition~\ref{prop:tight_frame_existence} guarantees that the isotropic-matrix constraint in \eqref{eq:two_layer_orientation_design_tf} is feasible for every \(R\ge 3\). This second design criterion therefore preserves the global optimality interpretation of the isotropic relaxed matrix while preventing redundant measurements.


		\begin{algorithm}[t]
			\caption{Multi-start optimization for principled orientation design}
			\label{alg:initial_orientation_optimization}
			\begin{algorithmic}[1]
				\STATE Set \(R\), smoothing parameter \(\beta\), penalty weight \(\rho\), number of random starts \(N_{\rm start}\), and isotropic-feasibility tolerance \(\epsilon_{\rm iso}\).
				\STATE Generate initial angle vectors \(\{(\vartheta_i,\varphi_i)\}_{i=1}^{R}\) from random unit vectors.
				\FOR{each initialization}
					\STATE Minimize \(\mathcal L(\mathcal Q)\) in \eqref{eq:smooth_second_layer_obj} over the spherical angles by a local nonlinear optimizer until convergence or a prescribed iteration limit is reached.
					\STATE Recover the converged candidate \(\mathcal Q\), compute its isotropic violation \(e_{\rm iso}(\mathcal Q)=\|\br G(\mathcal Q)-\frac{R}{3}\br I_3\|_F\), and compute its true projective correlation \(\mu(\mathcal Q)=\max_{i\ne j}|\br q_i^T\br q_j|^2\).
				\ENDFOR
				\STATE Form the admissible candidate set \(\mathcal C=\{\mathcal Q:e_{\rm iso}(\mathcal Q)\le\epsilon_{\rm iso}\}\).
				\STATE Select \(\mathcal Q^\star=\arg\min_{\mathcal Q\in\mathcal C}\mu(\mathcal Q)\).
			\end{algorithmic}
		\end{algorithm}

		Although (P1) is low-dimensional, it is non-convex because of the unit-sphere constraints, the quadratic isotropic-matrix constraint, and the quartic projective-correlation objective. In practice, it can be solved effectively by a multi-start smooth minimax optimization. Parameterize each orientation by two spherical angles,
		\begin{align}
			\br q_i(\vartheta_i,\varphi_i)
			=
			[\sin\vartheta_i\cos\varphi_i,
			\sin\vartheta_i\sin\varphi_i,
			\cos\vartheta_i]^T .
		\end{align}
		The optimization variables are therefore the \(2R\) angle variables \(\{(\vartheta_i,\varphi_i)\}_{i=1}^{R}\), and the unit-norm constraints in \eqref{eq:two_layer_orientation_design_unit} are automatically satisfied by this parameterization.
		For a given candidate set \(\mathcal Q\), replace the maximum in \eqref{eq:two_layer_orientation_design} by the log-sum-exp smooth approximation. Specifically, for \(a_{ij}=|\br q_i^T\br q_j|^2\) and \(M=R(R-1)/2\), we have
		\begin{align}
			\max_{i<j}a_{ij}
			\le
			\frac{1}{\beta}
			\log
			\sum_{i<j}\exp(\beta a_{ij})
			\le
			\max_{i<j}a_{ij}
			+
			\frac{\log M}{\beta},
		\end{align}
		which becomes tight as \(\beta\) increases. The isotropic-matrix condition is then enforced through a quadratic penalty, leading to the following smooth objective function:
		\begin{align}\label{eq:smooth_second_layer_obj}
			\mathcal L(\mathcal Q)
			=&
			\frac{1}{\beta}
			\log
			\sum_{i<j}
			\exp\left(
			\beta |\br q_i^T\br q_j|^2
			\right)
			\nonumber\\
			&+
			\rho
			\left\|
			\sum_{i=1}^{R}\br q_i\br q_i^T
			-
			\frac{R}{3}\br I_3
			\right\|_F^2,
		\end{align}
		where \(\beta>0\) controls the smoothness of the max approximation and \(\rho>0\) controls the isotropic-matrix penalty.
		For large \(\beta\), the log-sum-exp term closely approximates the maximum correlation objective, while for large \(\rho\), the penalty term strongly enforces the isotropic-matrix constraint.
		Hence, minimizing \eqref{eq:smooth_second_layer_obj} over the spherical angle variables amounts to solving a smooth penalized approximation of the original problem (P1). Each local run is performed by the Nelder--Mead simplex method \cite{lagarias1998convergence} over the \(2R\) angle variables.
		The resulting procedure is summarized in Algorithm~\ref{alg:initial_orientation_optimization}.

		\section{Numerical Results}\label{sec:numerical_result}
		
		In this section, we provide numerical results to validate the proposed design.
		Without loss of generality, we set $K = 12$ and all the MPCs are distributed randomly in the full 3D space. The complex coefficients are randomly generated as $\alpha_k = |\alpha_k| e^{\mr{j} \angle \alpha_k}$ with $|\alpha_k| \sim \mathcal{U}(0.8,1)$ and $\angle \alpha_k \sim \mathcal{U}(0,2\pi)$.
		The SNR is defined as \(\text{SNR} = \mathbb{E}\left[|r(\br{p}_n)|^2\right]/\sigma^2.\)
		Besides the proposed RULA-CT with $R = 7$ rotations, we consider the following benchmarks for comparison:
		\begin{itemize}
			\item \textbf{3D Cubic Beamforming \cite{medbo201560}}: With $U^3$ sampling positions specified in (\ref{eq:samp_3D_cubic}), the spectrum $\mathbb{P}(\phi,\theta)$ is synthesized offline via beamforming in (\ref{eq:conventional_3D_SPSup}). This approach approximates $P(\br{\nu})$ well via a Riemann-sum discretization over a volumetric aperture and is thus adopted as the benchmarking ground truth.
			\item \textbf{UPA Beamforming \cite{zhao2023nerf2}}: The spatial sampling positions are specified in (\ref{eq:samp_UPA}) with $U \times U$ UPA, and the spectrum $\mathbb{P}(\phi,\theta)$ is synthesized online via analog receive combining according to (\ref{eq:conventional_3D_SPSup}) with $\theta \in \left[0,\pi/2\right]$ as it can only cover half 3D space without ambiguity.
		\end{itemize}

		\begin{figure}[t]
			\centering
			\includegraphics[width=2.7in]{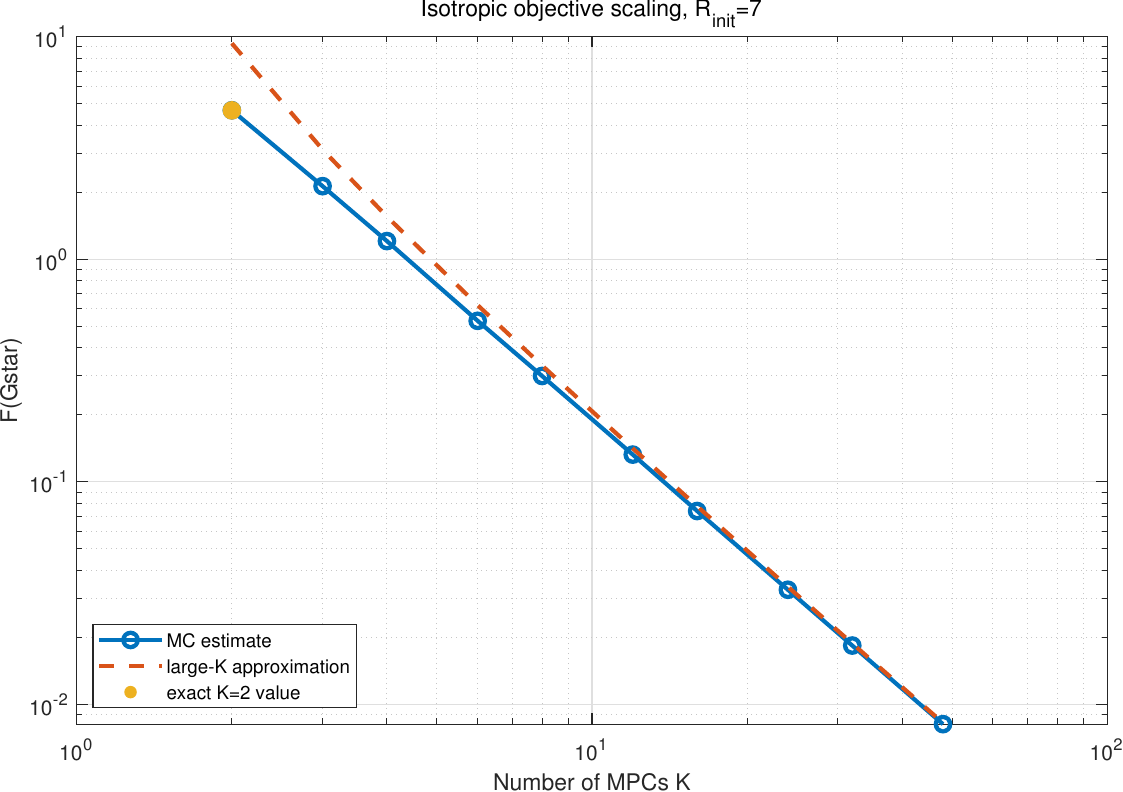}
			\caption{MC verification of the asymptotic scaling of \(F(\br G^\star)\).}
			\label{fig:isotropic_objective_asymptotic}
		\end{figure}

		\subsection{Monte Carlo Verification of the Large-\(K\) Scaling}
		We first verify the asymptotic expression in \eqref{eq:isotropic_F_largeK}. For the isotropic matrix \(\br G^\star=\frac{R}{3}\br I_3\), we estimate
		\begin{align}
			F(\br G^\star)
			=
			\mathbb E
			\left[
			\min_{k\ne \ell}
			(\br s_k-\br s_\ell)^T
			\br G^\star
			(\br s_k-\br s_\ell)
			\right]
		\end{align}
		by \(5\times 10^4\) Monte Carlo (MC) trials for each value of \(K\) with \(R=7\). Fig.~\ref{fig:isotropic_objective_asymptotic} compares the MC estimates with the large-\(K\) approximation in \eqref{eq:isotropic_F_largeK}. The approximation becomes increasingly accurate as \(K\) grows, which is consistent with Proposition \ref{prop:isotropic_objective_value}. The exact value for \(K=2\), namely \(F(\br G^\star)=2R/3\), is also shown for reference.

		\begin{figure*}[t]
			\centering
			\includegraphics[width=\textwidth*12/14]{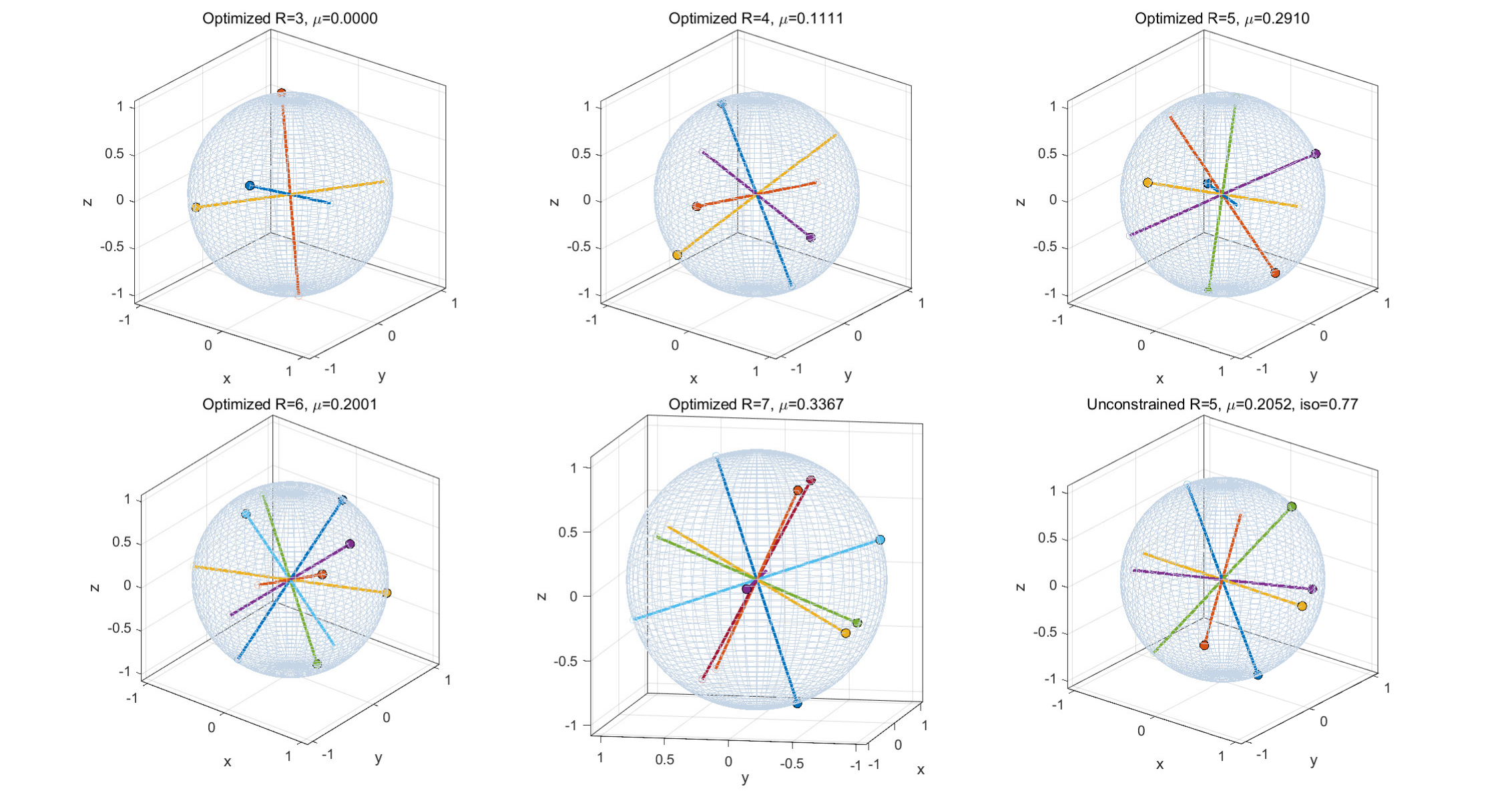}
			\caption{Optimized orientation realizations for \(R=3,\ldots,7\) obtained by solving (P1). 
			Each line represents a ULA orientation.}
			\label{fig:Rinit_orientation_designs}
		\end{figure*}

		\subsection{Numerical Verification of Orientation Realization}
		We next numerically solve (P1) for different $R$ with \(R \ge 3\). The optimization uses the smooth objective $\mathcal L(\mathcal Q)$ in \eqref{eq:smooth_second_layer_obj} with multiple random initializations $N_{\text{start}}$. Fig.~\ref{fig:Rinit_orientation_designs} shows the optimized projective axes for \(R=3,\ldots,7\). The obtained axes are well spread over the sphere while satisfying the isotropic-matrix constraint, which verifies that the isotropic solution in \eqref{eq:isotropic_G_optimal} can be realized by concrete finite orientation sets. Notice that the design
		with $R=7$  is very close to the original 7-orientation design in our preliminary work\cite{hua2026rotating}. For comparison, Fig.~\ref{fig:Rinit_orientation_designs} also includes an unconstrained \(R=5\) solution that minimizes the worst-case projective correlation without enforcing \eqref{eq:two_layer_orientation_design_tf}. This unconstrained design achieves a smaller correlation value $\mu(\mathcal Q)$, but its isotropy error $e_{\text{iso}}(\mathcal Q)$ defined in Algorithm \ref{alg:initial_orientation_optimization} is larger, which shows that the proposed two criteria are indeed different and complementary.
		
		Fig.~\ref{fig:Rinit_orientation_convergence} depicts the convergence behavior of the multi-start smooth minimax optimization. Here, one iteration refers to one local Nelder--Mead simplex update, during which the simplex is modified through reflection, expansion, contraction, or shrinkage according to the evaluated values of \(\mathcal L(\mathcal Q)\) \cite{lagarias1998convergence}. 
		Gray curves correspond to different random starts, while the colored curve gives the selected feasible run with the smallest worst-case projective correlation among the admissible candidate sets. The curves show that the proposed algorithm reliably decreases the penalized objective and recovers low-redundancy realizations under the constraints. We then use the obtained design with \(R=7\) for spectrum synthesis in the following.

		\begin{figure*}[t]
			\centering
			\includegraphics[width=\textwidth*3/4]{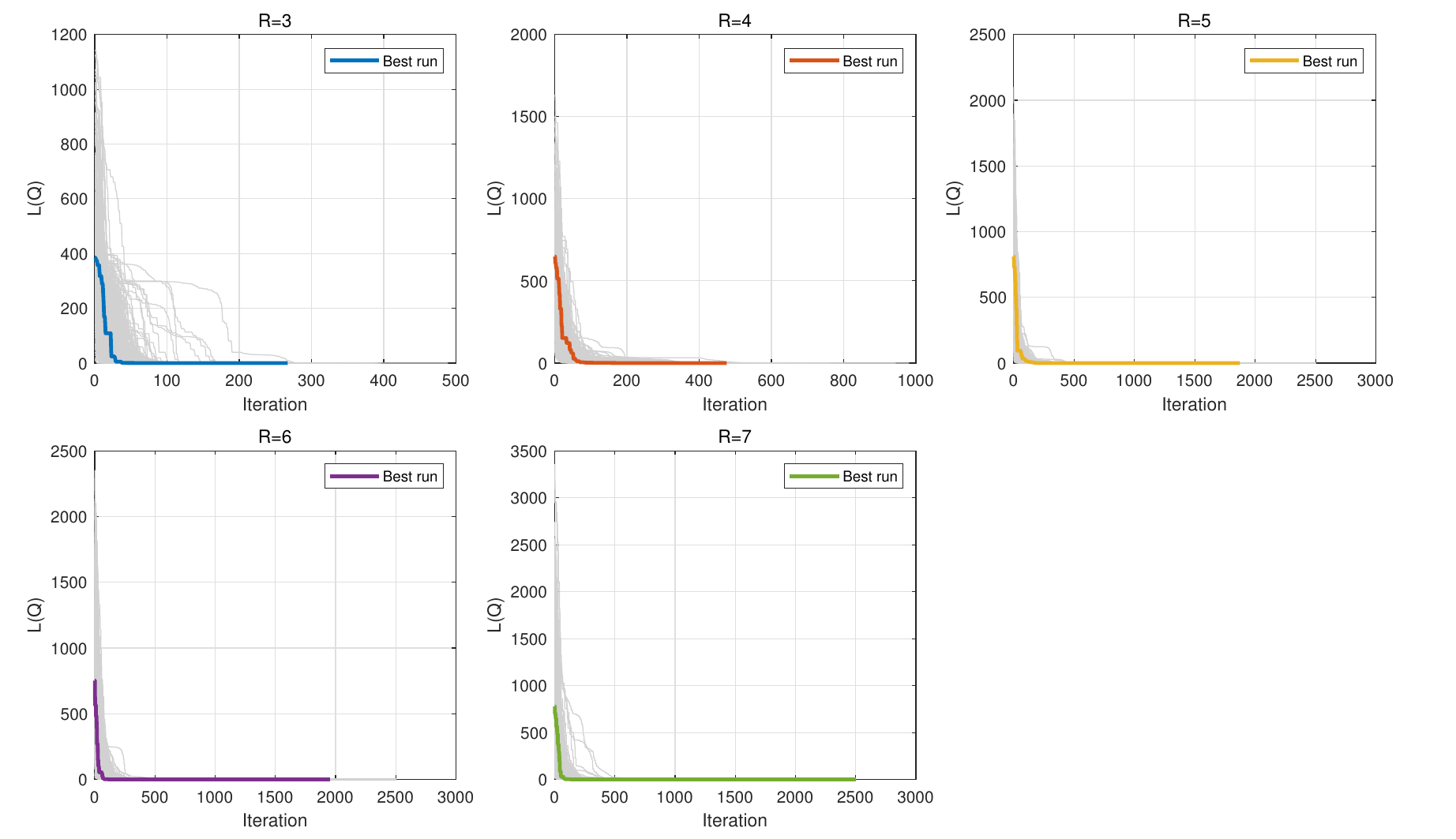}
			\caption{Convergence curves of the proposed multi-start smooth minimax optimization.}
			\label{fig:Rinit_orientation_convergence}
		\end{figure*}
		
		\begin{figure*}[t]
				\centering
				\setlength{\abovecaptionskip}{+0mm}
				\setlength{\belowcaptionskip}{+0mm}
				\subfigure[3D cubic beamforming.]{ \label{fig:3D_cubic_beam}
					\includegraphics[width=1.706in]{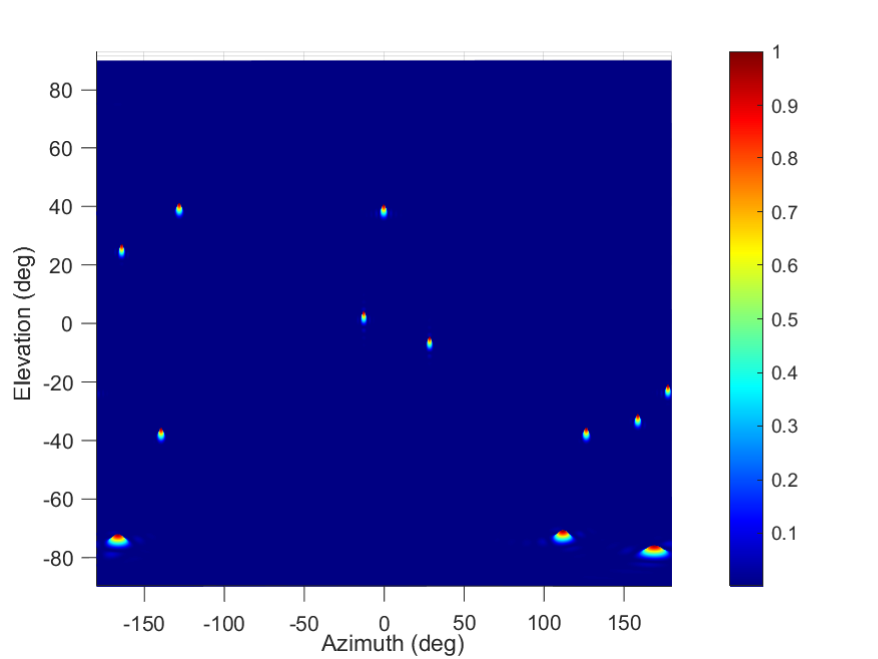}}
				\subfigure[UPA beamforming.]{ \label{fig:UPA_beam}
					\includegraphics[width=1.706in]{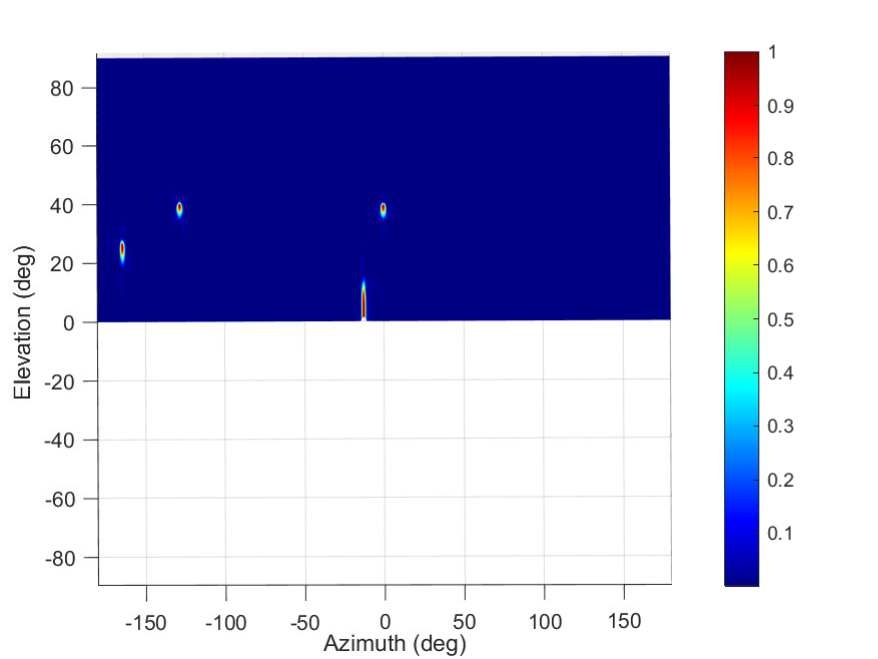}}
				\subfigure[RULA-min with $R=7$.]{ \label{fig:RULA_CT_beam}
					\includegraphics[width=1.706in]{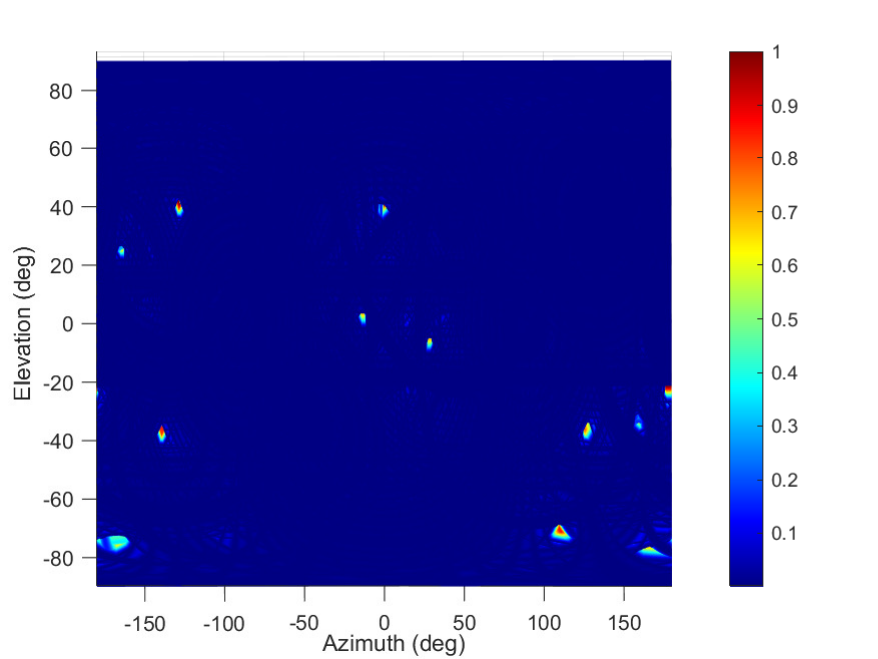}}
				\subfigure[RULA-joint with $R=7$.]{ \label{fig:RULA_joint_beam}
					\includegraphics[width=1.706in]{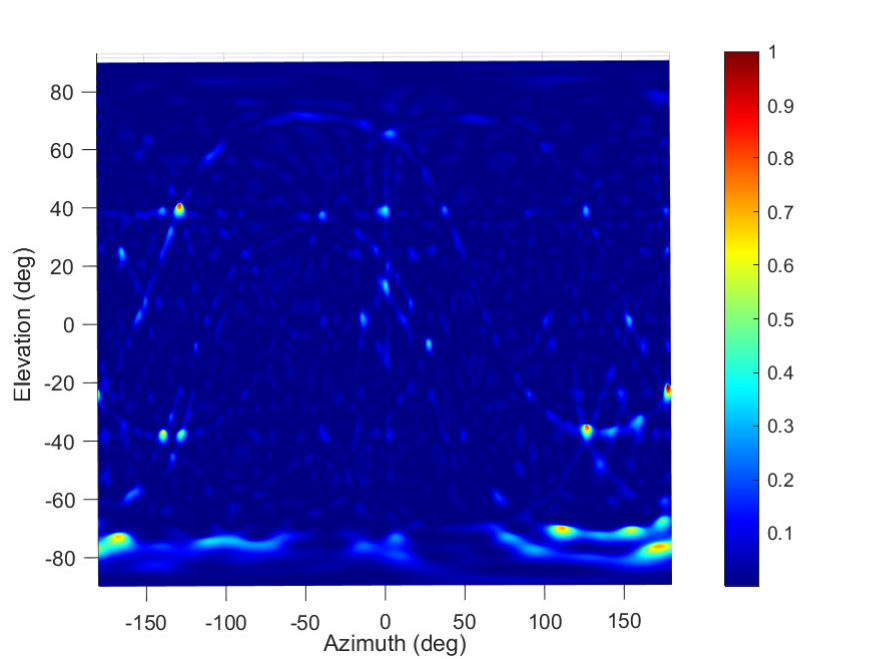}}	
				\caption{An instance with $U=49$ with SNR = 25 dB: (a). The noise-free offline beamforming with virtual 3D cubic array, which is adopted as the benchmarking ground truth. (b). The online beamforming with UPA. (c). RULA with pointwise minimum fusion. (d). RULA with joint coherent fusion.} %
				\label{fig:Synthesized_SPS}
		\end{figure*}

		\subsection{Synthesized Spatial Spectrum Quality Comparison}
		
		Fig. \ref{fig:Synthesized_SPS} compares four synthesis strategies via displaying their corresponding synthesized images on an equirectangular grid over azimuth and elevation.
		It is observed in Fig. \ref{fig:3D_cubic_beam} that the offline 3D cubic beamforming provides the cleanest and most faithful reconstruction where all 12 MPCs are sharply localized with minimal spurious peaks. 
		However, the drawback is that it requires measuring $\{r(\br{p}_n)\}$ over $U^3$ spatial positions, which is time-consuming and incurs substantial movement overhead. 
		While the widely adopted online analog receive combining with UPA successfully detects the MPCs at the front half-space with $\theta \in (0,\pi/2)$, those on the back side are inherently not observable, as shown in Fig. \ref{fig:UPA_beam}. Besides, paths arriving close to the plane of the UPA exhibit elongated peaks due to the shrinking projected aperture.
		Fig. \ref{fig:RULA_CT_beam} shows that the proposed RULA with pointwise minimum fusion (RULA-min) with $R=7$ achieves a spatial spectrum that is qualitatively close to the 3D cubic reference as all the 12 MPCs are detected across the full 3D space. 
		It is important to note that the spectrum is obtained using only $7U$ spatial samples, which enable a much more practical online synthesis pipeline while maintaining near optimal detectability. In contrast, the RULA with joint coherent fusion (RULA-joint) also recovers several MPC peaks, but exhibits prominent sidelobe and arc-shaped artifacts, indicating that the joint synthesis is more prone to sidelobe effects under sparse sampling.

		
		
		To compare the performance between different schemes quantitatively, we adopt the region-of-interest (ROI) - structural similarity index measure (SSIM) with spherical weighting \cite{hua2026rotating} and choose the image synthesized by the conventional noise-free 3D cubic approach as the reference.
		
		Specifically, let $A(\phi,\theta)$ and $B(\phi,\theta)$ denote two equirectangular spatial spectrum images sampled on a uniform grid
		$\phi\in[-\pi,\pi)$ 
		and $\theta\in[-\pi/2,\pi/2]$,
		yielding images $A,B\in\mathbb{R}^{N_\theta\times N_\phi}$.
		For each pixel $(i,j)$, the local SSIM is computed over a window $\Omega_{ij}$ (e.g., Gaussian-weighted)
		as \cite{zhao2023nerf2}
		\begin{equation}
			\mathrm{SSIM}_{ij}
			=
			\frac{\bigl(2\mu_A\mu_B + C_1\bigr)\bigl(2\sigma_{AB}+C_2\bigr)}
			{\bigl(\mu_A^2+\mu_B^2+C_1\bigr)\bigl(\sigma_A^2+\sigma_B^2+C_2\bigr)}.
		\end{equation}
		To avoid background-dominated similarity due to sparsity of MPCs, we restrict the averaging to an
		ROI defined by a joint local-maximum threshold $\tau$ within each $\Omega_{ij}$,
		\begin{equation}
			\nonumber
			M_{ij} \triangleq \max_{(p,q)\in\Omega_{ij}} \max(A(p,q),B(p,q)), \enspace
			\mathcal{M}_{ij} \triangleq \mathbf{1}\{M_{ij}>\tau\},
		\end{equation}
		where 
		$\mathcal{M}_{ij}\in\{0,1\}$ is the ROI binary mask and $\tau$ is set to be 0.4 unless otherwise stated.
		Finally, to ensure a fair comparison on the 3D sphere, we weight pixels by the spherical surface-area element. For an equirectangular parameterization, we have $d\Omega=\cos\theta\,d\phi\,d\theta$, hence we use \(W_{ij} = \cos\theta_i,\enspace \theta_i\in(-\pi/2,\pi/2),\)
		and define the ROI-SSIM score as the ROI masked, spherical-area-weighted mean:
		\begin{equation}
			\mathrm{ROI\text{-}SSIM}
			\triangleq
			\frac{\sum_{i,j} W_{ij}\,\mathcal{M}_{ij}\,\mathrm{SSIM}_{ij}}
			{\sum_{i,j} W_{ij}\,\mathcal{M}_{ij}}.
		\end{equation}
		\begin{figure}[t]
			\centering
			\includegraphics[width=2.5in]{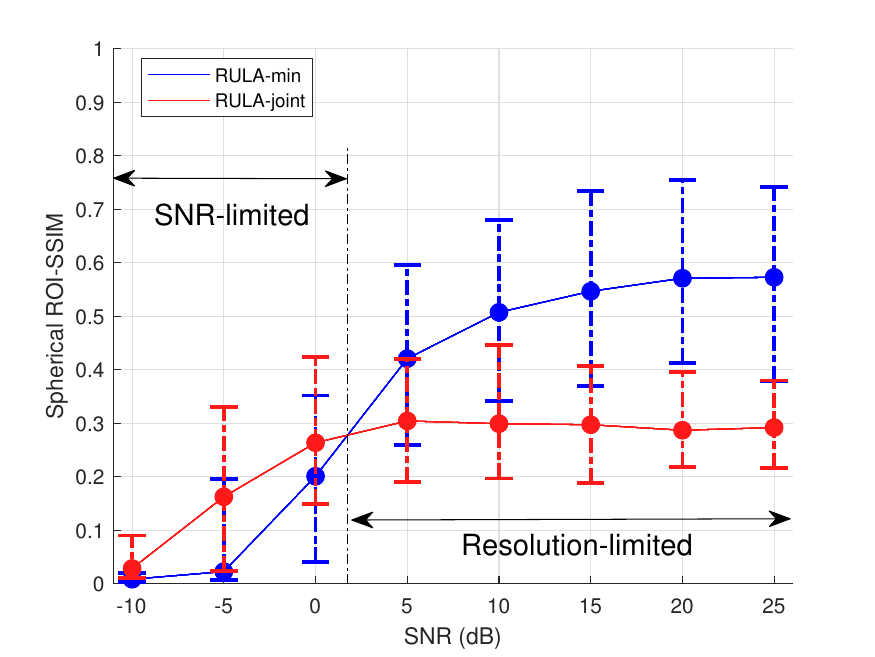}
			\caption{Spherical ROI-SSIM versus SNR.} 
			\label{fig:SSIM_vs_SNR}
		\end{figure}
		
		With the aforementioned setup with 100 MC trials, Fig. \ref{fig:SSIM_vs_SNR} shows the spherical ROI-SSIM as a function of SNR with $U = 65$. For each SNR bin, the maximum SSIM, minimum SSIM, and the mean SSIM over 100 MC trials are shown. It is observed that RULA-joint yields higher ROI-SSIM than RULA-min at low SNR as it coherently aggregates power from all $7U$ samples, which reduces the effective noise level. This makes it more robust in the SNR-limited regime, where SNR-enhancement via power aggregation from more sampling positions is more beneficial than sidelobe suppression. However, as SNR increases, RULA-min becomes consistently better and eventually saturates at a substantially higher ROI-SSIM than RULA-joint. In this region, additive noise is no longer the dominant impairment; instead, the limiting factor is the structural distortion of the synthesized spectrum, particularly sidelobe artifacts and ambiguity-induced spurious peaks. RULA-min is designed to reduce such artifacts through the minimum fusion operator, which tends to retain only directionally consistent peaks across rotations while rejecting rotation-dependent sidelobes. This region can therefore be interpreted as resolution-limited: improvements in SNR bring diminishing returns, and the dominant performance bottleneck becomes the sidelobe artifacts.
		
		\begin{figure}[t]
			\centering
			\includegraphics[width=2.5in]{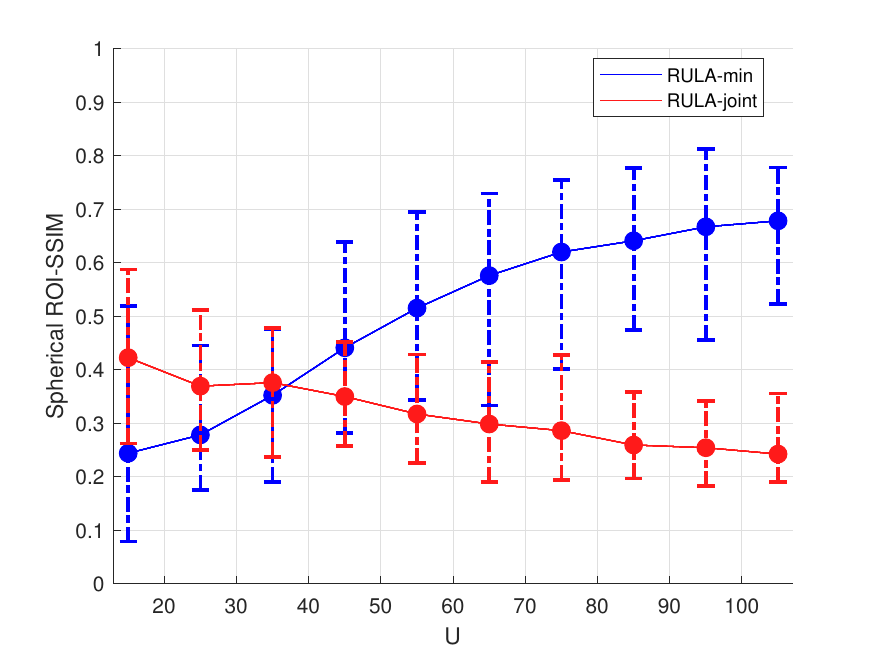}
			\caption{Spherical ROI-SSIM versus $U$.}
			\label{fig:SSIM_vs_U} 
		\end{figure}
		Fig. \ref{fig:SSIM_vs_U} shows the spherical ROI-SSIM versus $U$ with SNR  $=$ 20 dB. As $U$ increases, the performance of RULA-min improves monotonically and eventually saturates at a high level. This trend is expected for two reasons. First, a larger ULA provides a larger effective aperture, so the synthesized spectrum benefits from higher peak signal-to-noise ratio (PSNR) due to coherent integration across more spatial samples. Second, the angular resolution improves with $U$ (narrower mainlobe), which yields sharper, better-localized peaks and reduces peak overlap. In contrast, RULA-joint exhibits a decreasing ROI-SSIM as $U$ grows since when $U$ is small, the reference spectrum is relatively broad, and the joint aggregation across all $7U$ samples provides a strong integration gain that stabilizes peak detection. However, when $U$ increases, the reference spectrum becomes progressively sharper due to the enlarged 3D aperture, while RULA-joint remains limited by its joint combination artifacts and coarser peaks, as already shown in Fig. \ref{fig:RULA_joint_beam}. As a result, the mismatch between the RULA-joint spectrum and the reference increases with $U$, leading to a gradual reduction in the spherical ROI-SSIM.
		
		\section{Conclusion}
		
		This paper proposed a rotating-ULA-enabled framework for online full-space SPS synthesis. By rotating a ULA around its center, the receiver performs analog receive combining at multiple orientations and obtains a set of partial observations using a single RF chain. Under this sampling geometry, two rules including both pointwise minimum fusion and joint coherent fusion were developed for spectrum synthesis.
		A principled environment-agnostic orientation design method for selecting the ULA axes was also developed. The primary criterion maximizes the expected worst-case projected separation between random MPC pairs, for which the globally optimal relaxed design is shown to be isotropic in the sense that the summed outer products of all orientation axes form a scaled identity matrix. This condition can be exactly realized by finite orientation sets for all \(R\ge 3\). A secondary criterion then reduces orientation redundancy by minimizing the worst-case projective correlation under the isotropic-matrix constraint, and a multi-start smooth minimax algorithm was used to obtain concrete orientation realizations.
		Numerical results verified the large-\(K\) scaling law and showed that the optimized orientation sets are well distributed over the sphere. For SPS synthesis, the proposed RULA-min method reconstructed full-space 3D spectra close to the dense 3D cubic reference in the high-SNR regime while using only \(O(U)\) spatial samples instead of \(U^3\). The comparison with RULA-joint further showed that coherent aggregation is beneficial in the low-SNR regime, whereas the pointwise minimum fusion is more effective in the resolution-limited regime because it suppresses sidelobe and ambiguity artifacts. These results demonstrate that the proposed rotating-ULA CT framework provides a low-complexity and practical approach for efficient full-space wireless channel characterization.
		\appendices
		\section{Proof of Proposition~\ref{prop:isotropic_G_optimal}}\label{app:proof_isotropic_G}
		For a fixed realization of \(\{\br s_k\}\), the inner minimum in \(F(\br G)\) is the pointwise minimum of linear functions of \(\br G\), and is hence concave in \(\br G\). Taking expectation preserves concavity, so \(F(\br G)\) is concave. The positive-semidefinite constraint and the affine trace constraint in \eqref{eq:relaxed_G_design} are also convex.

		The uniform spherical distribution is rotationally invariant. Thus, for any rotation matrix \(\br U\in SO(3)\), where \(SO(3)\) is the group of 3D rotations, we have
		\begin{align}\label{eq:rot_invar_F}
			F(\br G)=F(\br U\br G\br U^T).
		\end{align}
		The constraints are also invariant since \(\br G\succeq\br 0\) implies \(\br U\br G\br U^T\succeq\br 0\), and
		\(\operatorname{tr}\{\br U\br G\br U^T\}=\operatorname{tr}\{\br G\}\). Let \(P\) denote the normalized Haar probability measure on \(SO(3)\), for which \(\int_{SO(3)}\mathrm dP(\br U)=1\) \cite{mezzadri2007generate}. Averaging all rotated copies of any feasible \(\br G\) over \(SO(3)\) gives
\(\bar{\br G} = \int_{SO(3)}\br U\br G\br U^T \mathrm dP(\br U) .\)
		By Jensen's inequality and the concavity of \(F\), applied to the random rotated matrix \(\br U\br G\br U^T\), we have
		\begin{align}
			F(\bar{\br G}) &=
			F\left(
			\mathbb E_{\br U}
			\left[
			\br U\br G\br U^T
			\right]
			\right) 
			\ge
			\mathbb E_{\br U}
			\left[
			F(\br U\br G\br U^T)
			\right] \nonumber\\
			&=
			\mathbb E_{\br U}
			\left[
			F(\br G)
			\right]
			=
			F(\br G),
		\end{align}
		where \(\mathbb E_{\br U}[\cdot]\) is with respect to \(P\), and the second equality follows from  \eqref{eq:rot_invar_F}.
		Hence, the rotational average of any feasible design is at least as good as the original one.

		It remains to identify the structure of \(\bar{\br G}\). For any fixed rotation \(\br V\in SO(3)\),
		\begin{align}\label{eq:structure_G_bar}
			\br V\bar{\br G}\br V^T
			&=
			\int_{SO(3)}
			(\br V\br U)\br G(\br V\br U)^T
			\mathrm dP(\br U) .
		\end{align}
		Since \(\br V\br U\) is also an element of \(SO(3)\), and the normalized Haar measure is invariant to this left multiplication, (\ref{eq:structure_G_bar}) is simply a relabeling of the same rotational average. Therefore,
		\begin{align}
			\nonumber
			\br V\bar{\br G}\br V^T
			=
			\int_{SO(3)}
			\br U\br G\br U^T
			\mathrm dP(\br U)
			=
			\bar{\br G},
			\quad \forall \br V\in SO(3).
		\end{align}
		Thus, \(\bar{\br G}\) is invariant under every rotation and has no preferred spatial direction. The only \(3\times 3\) symmetric matrices with this invariance are scalar multiples of the identity. Hence \(\bar{\br G}=c\br I_3\). Since trace is preserved by rotations,
		\begin{align}
			\operatorname{tr}\{\bar{\br G}\}
			=
			\operatorname{tr}\{\br G\}
			=
			R,
		\end{align}
		which gives \(c=R/3\). Therefore, we have \(\bar{\br G}=\frac{R}{3}\br I_3\), and since \(F(\bar{\br G})\ge F(\br G)\) for any feasible \(\br G\), \(\br G^\star=\frac{R}{3}\br I_3\) is globally optimal for \eqref{eq:relaxed_G_design}.
		Notice that this argument establishes global optimality but not uniqueness. The non-uniqueness is most transparent in the special case \(K=2\). 
		In this case, the inner minimum contains only one pair, and thus we have
		\begin{align}
			F(\br G)
			&=
			\mathbb E_{\br s_1,\br s_2}
			\left[
			(\br s_1-\br s_2)^T
			\br G
			(\br s_1-\br s_2)
			\right] \nonumber\\
			&=
			\operatorname{tr}
			\left\{
			\br G
			\mathbb E
			\left[
			(\br s_1-\br s_2)(\br s_1-\br s_2)^T
			\right]
			\right\}.
		\end{align}
		Since \(\br s_1\), \(\br s_2\) are independent and uniformly distributed on \(S^2\),
		\begin{align}
			\mathbb E
			\left[
			(\br s_1-\br s_2)(\br s_1-\br s_2)^T
			\right]
			=
			\frac{2}{3}\br I_3.
		\end{align}
		Therefore,
		\begin{align}
			F(\br G)
			=
			\frac{2}{3}\operatorname{tr}\{\br G\}
			=
			\frac{2R}{3},
		\end{align}
		for every feasible \(\br G\). Thus, all feasible PSD matrices satisfying \(\operatorname{tr}\{\br G\}=R\) are globally optimal when \(K=2\). Intuitively, with only one random MPC pair, 
		there is no competition among multiple pairwise separations.
		The isotropic choice remains a balanced global optimum, but it is not uniquely selected by this objective until multiple possible MPC pairs are present, 
		i.e., \(K>2\).

		\section{Proof of Proposition~\ref{prop:tight_frame_existence}}\label{app:proof_tight_frame_existence}
		We prove the result by construction. Let \(R\ge 3\), and define
		\begin{align}
			\alpha_i=\frac{2\pi(i-1)}{R},
			\enspace i=1,\ldots,R.
		\end{align}
		By choosing
		\(
			\br q_i
			=
			\left[
			\sqrt{\frac{2}{3}}\cos\alpha_i,
			\sqrt{\frac{2}{3}}\sin\alpha_i,
			\frac{1}{\sqrt{3}}
			\right]^T,
			\enspace i=1,\ldots,R, 
		\)
		it is easy to see that \(\br q_i\) has unit norm, since
		\begin{align}
			\|\br q_i\|^2
			=
			\frac{2}{3}\cos^2\alpha_i
			+
			\frac{2}{3}\sin^2\alpha_i
			+
			\frac{1}{3}
			=
			1.
		\end{align}
		Because the angles \(\{\alpha_i\}\) are uniformly spaced on the unit circle and \(R\ge 3\), we have \(
			\sum_{i=1}^{R}\cos\alpha_i
			=
			\sum_{i=1}^{R}\sin\alpha_i
			=
			0,\)
		\(
			\sum_{i=1}^{R}\cos^2\alpha_i
			=
			\sum_{i=1}^{R}\sin^2\alpha_i
			=
			\frac{R}{2}, \) and \(
			\sum_{i=1}^{R}\sin\alpha_i\cos\alpha_i=0.\)
		Using these identities, the outer-product sum becomes
		\begin{align}
			\sum_{i=1}^{R}\br q_i\br q_i^T 
			=
			\begin{bmatrix}
				\frac{R}{3} & 0 & 0\\
				0 & \frac{R}{3} & 0\\
				0 & 0 & \frac{R}{3}
			\end{bmatrix} 	=	\frac{R}{3}\br I_3.
		\end{align}
		Thus, the isotropic Gram matrix in \eqref{eq:isotropic_G_optimal} admits a decomposition into \(R\) unit-norm orientations for every \(R\ge 3\).

		\section{Proof of Proposition~\ref{prop:isotropic_objective_value}}\label{app:proof_isotropic_objective_value}
		Substituting \(\br G^\star=\frac{R}{3}\br I_3\) into \(F(\br G)\) gives
		\begin{align}\label{eq:first_sim}
			F(\br G^\star)
			&=
			\frac{R}{3}
			\mathbb E
			\left[
			\min_{k\ne \ell}
			\|\br s_k-\br s_\ell\|^2
			\right]
			=
			\frac{R}{3}\mathbb E[D_{\min}].
		\end{align}
		Since each MPC direction has unit norm, let \(\gamma_{k\ell}\) denote the angle between \(\br s_k\) and \(\br s_\ell\), then
		\begin{align}
			\|\br s_k-\br s_\ell\|^2
			=
			\|\br s_k\|^2+\|\br s_\ell\|^2
			-2\br s_k^T\br s_\ell
			=
			2-2\cos\gamma_{k\ell}.
		\end{align}

		For \(K=2\), by rotational invariance, we may fix \(\br s_1=[0,0,1]^T\) without loss of generality. Then \(\br s_1^T\br s_2\) is the \(z\)-coordinate \(Z\) of a uniformly distributed point on \(S^2\). The distribution on \(S^2\) is uniform with respect to surface area, so the probability of \(Z\) falling in an interval is proportional to the surface area of the corresponding horizontal spherical band. Equivalently, for \(-1\le z\le 1\), the event \(Z\le z\) corresponds to the spherical region below height \(z\), whose surface area is \(2\pi(z+1)\). Since the total surface area of \(S^2\) is \(4\pi\),
		\begin{align}
			F_Z(z)
			=
			\Pr(Z\le z)
			=
			\frac{2\pi(z+1)}{4\pi}
			=
			\frac{z+1}{2}.
		\end{align}
		Thus \(Z\sim\operatorname{Unif}[-1,1]\), and
		\begin{align}
			\mathbb E[D_{\min}]
			=
			\mathbb E[2-2\br s_1^T\br s_2]
			=
			2.
		\end{align}
		This gives \(F(\br G^\star)=2R/3\) for \(K=2\).

		It remains to examine the case when $K$ is sufficiently large. Let \(N_p=K(K-1)/2\). For a fixed \(x\ge 0\), define the close-pair count as
		\begin{align}
			\nonumber
			N_K(x)
			=
			\sum_{k<\ell}
			I_{k\ell}(x),
			\quad
			I_{k\ell}(x)
			=
			\mathbf 1
			\left\{
			\|\br s_k-\br s_\ell\|^2\le \frac{x}{N_p}
			\right\}.
		\end{align}
		The event \(N_pD_{\min}>x\) is equivalent to \(N_K(x)=0\). For any fixed pair \((k,\ell)\), let \(Z_{k\ell}=\br s_k^T\br s_\ell\). Since \(Z_{k\ell}\sim\operatorname{Unif}[-1,1]\) and \(\|\br s_k-\br s_\ell\|^2=2-2Z_{k\ell}\), we have
		\begin{align}
			\mathbb E[I_{k\ell}(x)]
			=
			\Pr\left(
			\|\br s_k-\br s_\ell\|^2\le \frac{x}{N_p}
			\right)
			=
			\frac{x}{4N_p}.
		\end{align}
		Hence we have \(\mathbb E[N_K(x)]=x/4\).

		The indicators \(\{I_{k\ell}(x)\}\) are rare. They are independent unless the two pairs share one index, and each pair shares an index with only \(O(K)\) other pairs. For example, conditioning on \(\br s_k\), the events \(I_{k\ell}(x)=1\) and \(I_{kj}(x)=1\) have probability
		\begin{align}
			\Pr(I_{k\ell}(x)=1,I_{kj}(x)=1)
			=
			\left(\frac{x}{4N_p}\right)^2 .
		\end{align}
		There are only \(O(K^3)\) such dependent pairwise combinations, but their total contribution is shown to vanish as \(K\to\infty\), i.e.,
		\begin{align}
			O(K^3)
			\left(\frac{x}{4N_p}\right)^2
			=
			O(K^{-1})
			\rightarrow 0, \enspace K \to \infty.
		\end{align}
		Thus, the close-pair count is asymptotically Poisson with mean \(x/4\) according to the Poisson approximation for rare locally dependent events \cite{arratia1989two}, i.e.,
		\begin{align}
			N_K(x)
			\Rightarrow
			\operatorname{Poisson}\left(\frac{x}{4}\right).
		\end{align}
		Consequently,
		\begin{align}
			\Pr(N_pD_{\min}>x)
			=
			\Pr(N_K(x)=0)
			\rightarrow
			\exp\left(-\frac{x}{4}\right),
		\end{align}
		which proves \(N_pD_{\min}\Rightarrow \operatorname{Exp}(1/4)\). Therefore, for large \(K\), 
		\begin{align}
			\mathbb E[D_{\min}]
			\approx
			\frac{4}{N_p}
			=
			\frac{8}{K(K-1)}.
		\end{align}
		Substituting this into \eqref{eq:first_sim} yields \eqref{eq:isotropic_F_largeK}.

		\bibliographystyle{IEEEtran}
		
		\bibliography{refsv3}

	\end{document}